\newextarrow{\xbigtoto}{{20}{20}{20}{20}}
   {\bigRelbar\bigRelbar{\bigtwoarrowsleft\rightarrow\rightarrow}}
\newextarrow{\xbigto}{{20}{20}}
   {\bigRelbar{\bigtwoarrowsleft\rightarrow}}
\newtheorem{theorem}{Theorem}
\newtheorem{definition}[theorem]{Definition}
\newtheorem{conjecture}[theorem]{Conjecture}
\newtheorem{example}{Example}
\begin{document}

\title{Contextuality in the Bundle Approach, n-Contextuality, and the Role of Holonomy}

\author{Sidiney B. Montanhano}
\email{s226010@dac.unicamp.br}
\affiliation{Departamento de Matemática Aplicada, Instituto de Matemática, Estatística e Computação Científica, Universidade Estadual de Campinas, 13083-859 Campinas, São Paulo, Brazil}
\orcid{0000-0002-6844-579X}

\maketitle

\begin{abstract}
  Contextuality can be understood as the impossibility to construct a globally consistent description of a model even if there is local agreement. In particular, quantum models present this property. We can describe contextuality with the bundle approach, where the scenario is represented as a simplicial complex, the fibers being the sets of outcomes, and contextuality as the non-existence of global section in the measure bundle. Using the generalization to non-finite outcome fibers, we built in details the concept of measure bundle, demonstrating the Fine-Abramsky-Brandenburger theorem for the bundle formalism. We introduce a hierarchy called n-contextuality to explore the dependence of contextual behavior of a model to the topology of the scenario, following the construction of it as a simplicial complex. With it we exemplify the dependence on higher homology groups and show that GHZ models, thus quantum theory, has all levels of the hierarchy. Also, we give an example of how non-trivial topology of the scenario result an increase of contextual behavior. For the first level of the hierarchy, we construct the concept of connection through Markov operators for the measure bundle. Taking the case of equal fibers we can identify the outcomes as the basis of a vector space, that transform according to a group extracted from the connection. We thus show that contextuality has a relationship with the non-triviality of the holonomy group in the respective frame bundle.
\end{abstract}

\section{Introduction}

\label{I}

Of all the descriptions of nature throughout human history, quantum behavior is the strangest. Not that it is necessary to satisfy our vision, but because it is after more than a century a formalism without a clear explanation that gives a reasonably intuitive panorama, which follows of our classic intuition that cannot confront quantum behavior and its with non-classical results. To make explicit such non-classicality, which appears in the early years of the development of quantum theory \cite{PhysRev.47.777}, the concept of nonlocality has emerged. With theoretical work first by Bell \cite{PhysicsPhysiqueFizika.1.195}, and with experiments that culminated in important results \cite{cite-key,PhysRevLett.115.250401,PhysRevLett.115.250402}, nonlocality finally explained the impossibility of understanding quantum theory by classical means. 

It is possible to describe the quantum theory classically, as in Bohm’s interpretation \cite{PhysRev.85.166}, but they all violate some property accepted as classical, as the speed limit for localized influence, and independence of the context in which the interaction occurs. This last property defines the concept of contextuality initially proposed in Ref. \cite{Kochen1975}, and it handles the inability to describe even with extra variables the results of a given physical system. As shown in Ref. \cite{Abramsky_2011}, nonlocality is a special case of contextuality, and one can also show from the topos formalism for quantum theory that contextuality is at the origin of all the main properties that make it what it is \cite{doring2020contextuality}, thus being at the origin of non-classicality. It can be informally explained as the inability to a global agreement even if local agreement occurs \cite{abramsky2020contextuality}, and as already showed \cite{Abramsky2015ContextualityCA}, it can be understood as the inability to remove a paradox from a data set, even including new variables, thus being described as the emergence of fundamental paradoxes \cite{Abramsky_2019}.

To understand contextual behavior, and to build tools in the study of specific models, several formalisms were constructed with their own limitations. For example, Contextuality-by-Default \cite{Dzhafarov1,Dzhafarov2,Dzhafarov3} seeks to suppose dependence in contexts from the beginning, on the level of random variables. Generalized contextuality \cite{Spekkens_2005} defines contextuality for the entire operational framework, and not just to measurements. Exclusivity graphs \cite{Cabello_2014} studies the maximum limits of violations with graph properties generated by the relationship between measurement outcomes, enabling the identification of the set of quantum correlations. The Sheaf approach to contextuality \cite{Abramsky_2011} seeks to encode contextuality through a structure in category theory, enabling equivalence between the absence of a global section and contextual behavior. It also studies the topological property of the models, and the relationship between contextuality and cohomology \cite{Car__2017,caru2018complete,montanhano2021contextuality}.

As a more recent example of formalism for the usual contextuality, we have the bundle approach, where the compatibility of measurements is codified in a hypergraph and for each vertex, understood as a measurement, we have the outcomes. The contextual behavior follows from the impossibility of globally codify the bundle constructed with these fibers, where the base is the compatibility hypergraph. Thus, contextuality appears as a property of the bundle. Its use already appeared as a graphic resource in the sheaf approach \cite{Beer_2018}, but its due formalism occurred later in Ref. \cite{Terra_2019_FibreBundle}.

In the first part of the present paper, Section \ref{II}, we make a detailed formalization for the bundle approach (\ref{IIB}), constructing the simplicial complex structure that is usually imposed in the compatibility hypergraphs (\ref{IIA}). The equality of the fibers is not supposed, putting the approach on the same level as the sheaf approach (\ref{IIC}). Another generalization is the use of non-finite fibers, as occurs in Ref. \cite{barbosa2019continuousvariable}. The version in the bundle approach of the Fine–Abramsky–Brandenburger theorem is presented (\ref{IIE}), identifying different definitions of noncontextuality, and enabling the identification of nonlocality as a special case of contextuality.

The second part, Section \ref{III}, studies the relation between contextuality and the topology of the scenario. The Vorob’ev theorem is revised, and even its property which comes from a simplification of the bundle is not a topological invariant, as an example shows, it can still suggest a geometric view of contextuality (\ref{IIIA}). By defining a hierarchy of probabilistic contextuality called $n$-contextuality (\ref{IIIB}), indexed by the maximal dimension of the contexts considered when probing the contextual behavior via contextual fraction \cite{Abramsky_2017}, we present an example of an empirical model where contextuality appears only when taking the second homological group into account ($2$-contextuality) and find an example of $n$-contextuality for all $n$ in quantum theory. We also find evidence of the dependence between the contextuality of a model and the topology of the underlying scenario (\ref{IIIC}).

Restricting to a finite fiber and graph as scenarios ($1$-contextuality) in the third part, Section \ref{IV}, the information from the previously static probabilistic outcome table is codified in Markov kernels, such that it can be understood as the connections between fibers (\ref{IVB}), flirting with the idea of dynamics in joint measurements. Singular value decomposition extracts a group element from each Markov kernel (\ref{IVC}), allowing the study of parallel transport in the bundle and the formalization of the contextual holonomy group (\ref{IVD}). We also define the respective discrete curvature in the measurement bundle with $1$-contextuality by the discrete exterior derivative using the generalized Stokes’ theorem, for cases with non-singular Markov kernels (\ref{IVF}).

Finally, we comment the results and future paths in Section \ref{V}. Examples are given throughout the paper (\ref{IID},\ref{IIF},\ref{IVE}).

\section{Bundle approach}
\label{II}

\subsection{Scenario}
\label{IIA}

Contextuality can be understood as a local concordance in the sense of compatibility of fundamental objects, usually called measurements because of physical applications, that can’t be extended to a global concordance. To formalize this idea, let’s first define how to write this concordance generically\footnote{In literature one usually consider compatibility hypergraphs \cite{amaral2017geometrical,Amaral2018}.}. 

\begin{definition}[Hypergraph]
A hypergraph is a triple $\left(X,\mathcal{C},\epsilon\right)$, where $X$ is a set of objects called vertices, $\mathcal{C}$ is a set of objects called hyperedges, and $\epsilon$ is the incidence weight $\epsilon:X\times\mathcal{C}\to\left\{0,1\right\}$ that codify if a vertex $m\in X$ is in a hyperedge $U\in\mathcal{C}$ ($\epsilon\left(m,U\right)=1$) or not ($\epsilon\left(m,U\right)=0$).
\end{definition}

This definition enables different hyperedges with the same vertices, thus encoding more information than just a relation between vertices. We might interpret it as different relations with the same objects, which can be important if one is working with relations in data analysis \cite{abramsky2020contextuality} or contextuality of ambiguous phrases in language \cite{Wang2021}. For our purposes, we can restrict to just one hyperedge by a set of vertices. Also, as we will study how the relationship between the objects interferes in the formed geometry, unrelated objects can be disregarded, restrict ourselves to the case that there is always at least one relation for any object.

\begin{definition}[Hypergraph realization as a covering]
A hypergraph $\left(X,\mathcal{C},\epsilon\right)$ is realized in the power set $\mathbb{P}\left(X\right)$ as a covering if it satisfies
\begin{itemize}
    \item $\mathcal{C}\subset\mathbb{P}\left(X\right)$;
    \item $\cup_{U_{i}\in \mathcal{C}}U_{i}=X$;
    \item $\emptyset\in \mathcal{C}$.
\end{itemize}
\end{definition}

From now on, we can omit the incidence weight.

For empirical realizations of these objects, their interpretation as a classical description, and our intuition of marginalization, the hyperedges must have the structure of a Boolean lattice, which allows the codification of them as a (abstract) simplicial complex.

\begin{definition}[Structure of a simplicial complex]
A hypergraph $\left(X,\mathcal{C}\right)$ with covering representation has the structure of a simplicial complex if it satisfies
\begin{itemize}
    \item if $U,V\in \mathcal{C}$, then $U\cap V\in \mathcal{C}$ (intersection closure); 
    \item if $U,V\in \mathcal{C}$, and $U\subset V$ then $V-U\in \mathcal{C}$ (complementation closure).
\end{itemize}
The elements of $\mathcal{C}$ will be called simplices.
\end{definition}

We can now define the initial concept to consider contextuality.

\begin{definition}[Scenario]
Let $X$ be a collection of fundamental objects called simple measurements, in the sense of being atomic. On $X$ we define a covering $\mathcal{C}\subset \mathbb{P}\left(X\right)$ such that it satisfies the structure of a simplicial complex. We call the elements of $\mathcal{C}$ contexts of the measurement set $X$, which are compatibles if they are in the same context. A pair $\mathbf{X}=\left(X,\mathcal{C}\right)$ is called a scenario.
\end{definition}

This definition implies that there are maximal contexts ($U\in \mathcal{C}$ is maximal if for all $V\in \mathcal{C}$ and $U\subset V$, then $U=V$), and minimal contexts ($U\in \mathcal{C}$ is minimal if for all $V\in \mathcal{C}$ and $V\subset U$, then $U=V$). We will denote the set of maximal contexts as $\mathcal{C}^{+}$, and the set of minimal contexts as $\mathcal{C}^{-}$. By the Boolean structure, we can limit our study to minimal contexts, and not the simple measurements. Therefore, we will call by measurements the minimal contexts and treat them as vertices of the simplicial complex\footnote{One can ask about the fundamental nature of measurements because of the coarse-graining to minimal contexts, relating it with a loss of information \cite{duarte2020investigating}. Such problem only appears here because we treat measurements as the fundamental objects. A more natural description is by the notion of coverage, and consequently of site, for the category of contexts. A site can be thought as a space without points, implying such a question is empty: there are no fundamental objects, only the covering, and its construction depends on how refined is the observer’s tools. For this, a categorical treatment of the scenario is necessary \cite{Karvonen_2019}. Such a notion of fundamental objects is allowed by the Boolean lattice, but is of a metaphysical origin, as appears in classical physics \cite{Del_Santo_2019}.}.

\subsection{Measurable bundle}

\label{IIB}

A bundle \cite{FibreBundle} is just a map $\pi :E\to B$ between a total set $E$ and a base set $B$, where the sets $F_{b}=\pi^{-1}(b)$ for each $b\in B$ are the fibers over the point $b$. One can impose topological structure to the objects of a bundle, defining a topological bundle. A fibration is a topological bundle where the fibers must be homotopy equivalent\footnote{There can be more properties imposed that defines different notions of fibrations, usually in how to treat homotopy lifting \cite{Hatcher:478079}.}. A fiber bundle is a topological bundle where the function $\pi :E\to B$ between topological spaces $E$ (the total space) and $B$ (the base space) is continuous, the fibers must be homeomorphic and it must satisfy the local triviality condition: there is a neighborhood $U\subset B$ of $b$ such that $\pi^{-1}(U)$ is homeomorphic to the Cartesian product $U\times F$.

An important concept from fiber bundles are the sections, the elements of the fibers. They can be understood as continuous functions $s_{U}:U\to E$ with $U$ the neighborhood of $b\in B$ satisfying the local triviality condition, such that $\pi\circ s_{U}$ is the identity of $U$. As we will handle with bundles, the continuity property of the sections will be excluded, keeping only the property of $s_{U}$ be the local inverse of $\pi$. If $U=B$ the section is said global, if not it is said local.

The Bundle approach is a generalization of the first introduced fibre bundle approach \cite{Terra_2019_FibreBundle} by extending to non-finite fibers and allowing they differ from each other. The main idea is to define a local triviality condition in a measure sense, where the neighborhood are the contexts, and contextuality is the failure to turn this local condition to a global one. The construction follows analogously to the fibre bundle approach, but a cautions use of objects is necessary. Unlike the usual presentation of bundles that starts from the total set $E$, the construction starts with the base and the sections, in analogy with the fiber bundle construction theorem \cite{BundleNorman}, a consequence of the incompatibility between measurements.

\begin{definition}
An event bundle $(X,\mathcal{C},\{s_{U}\}_{U\in\mathcal{C}^{+}})$ is a scenario $(X,\mathcal{C})$ with sections $s_{U}(V)=(V,O^{V})$, where $V\in \mathcal{C}$ is a sub-context of $U\in\mathcal{C}^{+}$ and $O^{V}$ is the set of outcomes over the context $V$, which satisfies
\begin{itemize}
    \item $O^{U}=O^{V}\times O^{U-V}$, thus defining the marginalization $s_{U}(V)=s_{U}(U)|_{V}$;
    \item there is a global section $s_{X}$ that marginalizes to the contexts in $\mathcal{C}$, i.e. $s_X(V)=s_U(V)$ for all $U\in \mathcal{C}^+$ and $V\subset U$, $V\in\mathcal{C}$.
\end{itemize}
\end{definition}

The sections codify the local triviality condition of an event bundle and one can see that it must be globally trivial. The idea is to turn the elements of each fiber, called events, as the marginalization of well-defined global events.

We will first consider generic fibers, not only finite ones, thus a $\sigma$-algebra of each set of outcomes must be made explicit, making them measurable spaces. We will denote each fiber of a context $U$ as $\textbf{O}^{U}=(O^{U},\Sigma^{U})$.

\begin{definition}
A measurable bundle $(X,\mathcal{C},\{s_{U}\}_{U\in\mathcal{C}^{+}})$ is an event bundle with sections defining the fibers as measurable spaces, i.e. $s_{U}(V)=(V,\mathbf{O}^{V})$ satisfying
\begin{itemize}
    \item $\mathbf{O}^{U}=\mathbf{O}^{V}\times \mathbf{O}^{U-V}$, thus defining the marginalization $s_{U}(V)=s_{U}(U)|_{V}$;
    \item there is a global section $s_{X}$ that marginalizes to the contexts in $\mathcal{C}$, i.e. $s_X(V)=s_U(V)$ for all $U\in \mathcal{C}^+$ and $V\subset U$, $V\in\mathcal{C}$.
\end{itemize}
\end{definition}

Effectively, the events over each context are the elements of the $\sigma$-algebra, and we will call them events in the following, keeping the $\sigma$-algebra implicit when they are trivial\footnote{Usually, the events are the elements of $O^{U}$. But this only holds because one is using a $\sigma$-algebra with atomic objects, usually the discrete one.}. Measurement bundle is the generalization in the bundle approach to the notion of measurement scenario, a scenario with outcomes attached in a consistent way to the measurements.

One can request a topology of a measurable bundle such that it could be seen as a fiber bundle when the fibers are homeomorphic. Even if it is initially unnecessary once one is interested in the behavior of measures on the measurable bundle, there are open questions about the relation of such behaviors and topological data. There are two natural topologies for a scenario, the geometric representation of the simplicial complex and the discrete topology with minimal contexts as points. A fiber $\textbf{O}^{U}$ can be equipped with the thickest topology that contains its $\sigma$-algebra,
\begin{widetext}
\begin{equation}
    \tau(\Sigma^{U})=\bigcap\{\tau\subseteq \mathbb{P}(O^{U})|\tau\text{ is a topology on $O^{U}$ with $\Sigma^{U}\subseteq\tau$}\},
\end{equation}
\end{widetext}
thus respecting the measurable bundle structure.

\subsection{Measure bundles}
\label{IIC}

Given a measurable bundle, one can ask for measures defined on it. As we are interested in probabilistic measures and in analogy with axiomatic quantum theory, one can define a state as a function that gives for each context a probabilistic measure.

\begin{definition}
A state is a map $P::U\in\mathcal{C}\to \mu^{U}$ from a context to a probabilistic measure on its events, $\mu^{U}:\Sigma^{U}\to [0,1]$, such that $\int_{O^{U}}\mu^{U}=1$, where the integral is defined on the $\sigma$-algebra of the fiber.
\end{definition}

From now on, when we write measure, it will be understood that it is a probabilistic measure, leaving the exploration of other kind of measures, like signed measures, to a future work. The notion of state codify the new data necessary to define the main object of any approach to contextuality.

\begin{definition}[Empirical model]
An empirical model is defined as an measurable bundle equipped with a fixed state $P$ acting on it.
\end{definition}

One can promote a section $s_{U}$ of an measurable bundle to a section which also takes a context $U$ and gives a measure $\mu^{U}$ on the fiber $\mathbf{O}^{U}$, thus $s_{U}(V)=(V,\mathbf{O}^{V},\mu^{U}|_{V})$. But one cannot readily define a bundle once here there is no imposition of a global triviality as for event bundles, it needs measurable transition functions between the measure spaces at the intersections between contexts, which were take implicitly for event bundles due to global triviality.

\begin{definition}
A measure bundle $(X,\mathcal{C},\{s_{U}\}_{U\in\mathcal{C}^{+}},\{t_{U,V}\}_{U,V\in \mathcal{C},U\cap V\neq\emptyset})$ is an empirical model with a collection of measurable transition functions $t_{U,V}:s_{U}|_{U\cap V}\to s_{V}|_{U\cap V}$.
\end{definition}

A natural choice of transition functions is the trivial one, know in literature as no-disturbance condition on empirical models.

\begin{definition}[No-disturbance]
An empirical model is called non-disturbing if for any two contexts $U_{1}$ and $U_{2}$ with $U=U_{1}\cap U_{2}\neq\emptyset$, then $\mu^{U_{1}}|_{U}=\mu^{U_{2}}|_{U}=\mu^{U}$ holds. This is equivalent to the transition functions $\{t_{U,V}\}$ being trivial, i.e. they are the identity of the intersection’s fiber in the bundle representation. Otherwise the empirical model is called disturbing.
\end{definition}

Here the terms measure bundle and empirical model will be used interchangeably, and the no-disturbance condition will be supposed.

\subsection{First examples of measure bundles}
\label{IID}

\begin{example}[Trivial] 
The standard trivial example of a non-disturbing measure bundle, analogous to the informal description of a trivial bundle, is given by the table of section probabilities of Table \ref{Trivial bundle}.
\begin{figure}
\centering
\scalebox{0.3}{
\begin{tikzpicture}

\draw [ultra thick] (-9,12) -- (-3,10) node[right] {\Huge{$\ a$}};
\draw [ultra thick] (-3,10) -- (0,13) node[right] {\Huge{$\ b$}}; 
\draw [ultra thick] (0,13) -- (-2,15.5) node[right] {\Huge{$\ c$}}; 
\draw [ultra thick] (-2,15.5) -- (-6,15)
node[left] {\Huge{$d\ $}}; 
\draw [ultra thick] (-6,15) -- (-9,12) node[left] {\Huge{$e\ $}};

\filldraw [black] (-9,12) circle (4pt);
\filldraw [black] (-3,10) circle (4pt);
\filldraw [black] (0,13) circle (4pt);
\filldraw [black] (-2,15.5) circle (4pt);
\filldraw [black] (-6,15) circle (4pt);

\draw[loosely dotted, ultra thick] (-9,12) -- (-9,21);
\draw[loosely dotted, ultra thick] (-3,10) -- (-3,19);
\draw[loosely dotted, ultra thick] (0,13) -- (0,22);
\draw[loosely dotted, ultra thick] (-2,15.5) -- (-2,24.5);
\draw[loosely dotted, ultra thick] (-6,15) -- (-6,24);

\draw [ultra thick] (-9,19) -- (-3,17);
\draw [ultra thick] (-3,17) -- (0,20) node[right] {\Huge{$\ 1$}}; 
\draw [ultra thick] (0,20) -- (-2,22.5); 
\draw [ultra thick] (-2,22.5) -- (-6,22); 
\draw [ultra thick] (-6,22) -- (-9,19);

\filldraw [black] (-9,19) circle (4pt);
\filldraw [black] (-3,17) circle (4pt);
\filldraw [black] (0,20) circle (4pt);
\filldraw [black] (-2,22.5) circle (4pt);
\filldraw [black] (-6,22) circle (4pt);

\draw [ultra thick] (-9,21) -- (-3,19);
\draw [ultra thick] (-3,19) -- (0,22) node[right] {\Huge{$\ 0$}}; 
\draw [ultra thick] (0,22) -- (-2,24.5); 
\draw [ultra thick] (-2,24.5) -- (-6,24); 
\draw [ultra thick] (-6,24) -- (-9,21);

\filldraw [black] (-9,21) circle (4pt);
\filldraw [black] (-3,19) circle (4pt);
\filldraw [black] (0,22) circle (4pt);
\filldraw [black] (-2,24.5) circle (4pt);
\filldraw [black] (-6,24) circle (4pt);

\end{tikzpicture}}
\caption{The visualization of the trivial example of a measure bundle through its bundle diagram. The global sections are explicit, and determined by the fiber elements.}
\label{Trivial bundle Fig}
\end{figure}

\begin{table}
  \centering
\begin{tabular}{|c|c|c|c|c|}
\hline\noalign{\smallskip}
& \large{$00$} & \large{$01$} & \large{$10$} & \large{$11$} \\
\noalign{\smallskip}\hline\noalign{\smallskip}
\large{$ab$} & \large{$\frac{1}{2}$} & \large{$0$} & \large{$0$} & \large{$\frac{1}{2}$} \\
\noalign{\smallskip}\hline\noalign{\smallskip}
\large{$bc$} & \large{$\frac{1}{2}$} & \large{$0$} & \large{$0$} & \large{$\frac{1}{2}$} \\
\noalign{\smallskip}\hline\noalign{\smallskip}
\large{$cd$} & \large{$\frac{1}{2}$} & \large{$0$} & \large{$0$} & \large{$\frac{1}{2}$} \\
\noalign{\smallskip}\hline\noalign{\smallskip}
\large{$de$} & \large{$\frac{1}{2}$} & \large{$0$} & \large{$0$} & \large{$\frac{1}{2}$} \\
\noalign{\smallskip}\hline\noalign{\smallskip}
\large{$ea$} & \large{$\frac{1}{2}$} & \large{$0$} & \large{$0$} & \large{$\frac{1}{2}$} \\
\noalign{\smallskip}\hline
\end{tabular}
\caption{Table of outcome probabilities of each context of the trivial example.}
    \label{Trivial bundle}
\end{table}
Here, and for all the explicit examples of this paper, we have finite simplicial complex as the base and finite fibers. The contexts of this example are
\begin{equation}
    \mathcal{C}=\left\{ab,bc,cd,de,ea,a,b,c,d,e\right\},
\end{equation} 
with measurements 
\begin{equation}
\mathcal{C}^{-}=\left\{a,b,c,d,e\right\}
\end{equation} 
and maximal contexts
\begin{equation}
\mathcal{C}^{+}=\left\{ab,bc,cd,de,ea\right\}.
\end{equation} 
All the fibers for measurements are the same, $O^{i}=\left\{0,1\right\}$. Its table gives the probabilities for each event of each maximal context. Examples with finite fibers can be described by the values of each element of $O^{U}$, using the discrete $\sigma$-algebra. The bundle diagram is the possibilistic coarse-graining of the model, with only the non-null events appearing. All the events of all the intersections have probability $\frac{1}{2}$. It can be described as two global events with probability $\frac{1}{2}$ each. Its no-disturbance follows from the fact that all the marginalization over the probabilities of measurement outcomes also yields $\frac{1}{2}$ each.
\end{example}

\begin{example}[Liar cycle]
An example of non-disturbing measure bundle with no global section is the liar cycle, Table \ref{Liar cycle}, here with five measurements. It could be understood as a set of individuals saying that the next one will say the truth, cyclically, but the last one saying that the first one lied, such that a paradox occurs. The structure of the event bundle is the same as the previous example, it only differs in the measure defined on $\textbf{O}^{ea}$. The bundle is non-disturbing, but global sections that define the bundle as a measure on global events do not exist.
\begin{figure}
  \centering
\scalebox{0.3}{
\begin{tikzpicture}

\draw [ultra thick] (-9,12) -- (-3,10) node[right] {\Huge{$\ a$}};
\draw [ultra thick] (-3,10) -- (0,13) node[right] {\Huge{$\ b$}}; 
\draw [ultra thick] (0,13) -- (-2,15.5) node[right] {\Huge{$\ c$}}; 
\draw [ultra thick] (-2,15.5) -- (-6,15)
node[left] {\Huge{$d\ $}}; 
\draw [ultra thick] (-6,15) -- (-9,12) node[left] {\Huge{$e\ $}};

\filldraw [black] (-9,12) circle (4pt);
\filldraw [black] (-3,10) circle (4pt);
\filldraw [black] (0,13) circle (4pt);
\filldraw [black] (-2,15.5) circle (4pt);
\filldraw [black] (-6,15) circle (4pt);

\draw[loosely dotted, ultra thick] (-9,12) -- (-9,21);
\draw[loosely dotted, ultra thick] (-3,10) -- (-3,19);
\draw[loosely dotted, ultra thick] (0,13) -- (0,22);
\draw[loosely dotted, ultra thick] (-2,15.5) -- (-2,24.5);
\draw[loosely dotted, ultra thick] (-6,15) -- (-6,24);

\filldraw [black] (-9,21) circle (4pt);
\filldraw [black] (-3,19) circle (4pt);
\filldraw [black] (0,22) circle (4pt);
\filldraw [black] (-2,24.5) circle (4pt);
\filldraw [black] (-6,24) circle (4pt);

\filldraw [black] (-9,19) circle (4pt);
\filldraw [black] (-3,17) circle (4pt);
\filldraw [black] (0,20) circle (4pt);
\filldraw [black] (-2,22.5) circle (4pt);
\filldraw [black] (-6,22) circle (4pt);

\draw [ultra thick] (-9,19) -- (-3,19);
\draw [ultra thick] (-3,17) -- (0,20) node[right] {\Huge{$\ 1$}}; 
\draw [ultra thick] (0,20) -- (-2,22.5); 
\draw [ultra thick] (-2,22.5) -- (-6,22); 
\draw [ultra thick] (-6,22) -- (-9,19);

\draw [ultra thick] (-9,21) -- (-3,17);
\draw [ultra thick] (-3,19) -- (0,22) node[right] {\Huge{$\ 0$}}; 
\draw [ultra thick] (0,22) -- (-2,24.5); 
\draw [ultra thick] (-2,24.5) -- (-6,24); 
\draw [ultra thick] (-6,24) -- (-9,21);

\end{tikzpicture}}
\caption{The visualization of the liar cycle example through its bundle diagram. There isn't any global event, caused by the swap of elements in the path $ea$.}
\label{Liar cycle  Fig}
\end{figure}

\begin{table}
  \centering
\begin{tabular}{|c|c|c|c|c|}
\hline\noalign{\smallskip}
& \large{$00$} & \large{$01$} & \large{$10$} & \large{$11$} \\
\noalign{\smallskip}\hline\noalign{\smallskip}
\large{$ab$} & \large{$\frac{1}{2}$} & \large{$0$} & \large{$0$} & \large{$\frac{1}{2}$} \\
\noalign{\smallskip}\hline\noalign{\smallskip}
\large{$bc$} & \large{$\frac{1}{2}$} & \large{$0$} & \large{$0$} & \large{$\frac{1}{2}$} \\
\noalign{\smallskip}\hline\noalign{\smallskip}
\large{$cd$} & \large{$\frac{1}{2}$} & \large{$0$} & \large{$0$} & \large{$\frac{1}{2}$} \\
\noalign{\smallskip}\hline\noalign{\smallskip}
\large{$de$} & \large{$\frac{1}{2}$} & \large{$0$} & \large{$0$} & \large{$\frac{1}{2}$} \\
\noalign{\smallskip}\hline\noalign{\smallskip}
\large{$ea$} & \large{$0$} & \large{$\frac{1}{2}$} & \large{$\frac{1}{2}$} & \large{$0$} \\
\noalign{\smallskip}\hline
\end{tabular}
\caption{Table of outcome probabilities of each context of the liar cycle example.}
\end{table}
The liar $n$-cycle is an important example for $n$-cycle scenarios, i.e. scenarios with $n$ measurements organized as a cycle, once one can construct any paradoxical behavior on $n$-cycle scenarios by them \cite{PhysRevA.104.022201}. It is also important as an example of nonlocal “superquantum” correlations genarated by the well-known Popescu-Rohrlich boxes \cite{Popescu1994}.
\label{Liar cycle}
\end{example}

\begin{example}[KCBS]
The next example looks like an extreme version of the liar cycle, but in fact just repeat the change of elements of the outcome fiber in every context, or in the logical representation, everyone are saying the next one is a liar, Table \ref{KCBS}. Again the event bundle is the same, the difference is the measures, which is the bundle description of the Klyachko-Can-Binicioglu-Shumovsky pentagram scenario \cite{PhysRevLett.101.020403} realizable in quantum theory. There isn't a global section, which can be view by the bundle diagram.
\begin{figure}
  \centering
\scalebox{0.3}{
\begin{tikzpicture}

\draw [ultra thick] (-9,12) -- (-3,10) node[right] {\Huge{$\ a$}};
\draw [ultra thick] (-3,10) -- (0,13) node[right] {\Huge{$\ b$}}; 
\draw [ultra thick] (0,13) -- (-2,15.5) node[right] {\Huge{$\ c$}}; 
\draw [ultra thick] (-2,15.5) -- (-6,15)
node[left] {\Huge{$d\ $}}; 
\draw [ultra thick] (-6,15) -- (-9,12) node[left] {\Huge{$e\ $}};

\filldraw [black] (-9,12) circle (4pt);
\filldraw [black] (-3,10) circle (4pt);
\filldraw [black] (0,13) circle (4pt);
\filldraw [black] (-2,15.5) circle (4pt);
\filldraw [black] (-6,15) circle (4pt);

\draw[loosely dotted, ultra thick] (-9,12) -- (-9,21);
\draw[loosely dotted, ultra thick] (-3,10) -- (-3,19);
\draw[loosely dotted, ultra thick] (0,13) -- (0,22);
\draw[loosely dotted, ultra thick] (-2,15.5) -- (-2,24.5);
\draw[loosely dotted, ultra thick] (-6,15) -- (-6,24);

\filldraw [black] (-9,21) circle (4pt);
\filldraw [black] (-3,19) circle (4pt);
\filldraw [black] (0,22) circle (4pt);
\filldraw [black] (-2,24.5) circle (4pt);
\filldraw [black] (-6,24) circle (4pt);

\filldraw [black] (-9,19) circle (4pt);
\filldraw [black] (-3,17) circle (4pt);
\filldraw [black] (0,20) circle (4pt);
\filldraw [black] (-2,22.5) circle (4pt);
\filldraw [black] (-6,22) circle (4pt);

\draw [ultra thick] (-9,19) -- (-3,19);
\draw [ultra thick] (-3,19) -- (0,20) node[right] {\Huge{$\ 1$}}; 
\draw [ultra thick] (0,20) -- (-2,24.5); 
\draw [ultra thick] (-2,24.5) -- (-6,22); 
\draw [ultra thick] (-6,22) -- (-9,21);

\draw [ultra thick] (-9,21) -- (-3,17);
\draw [ultra thick] (-3,17) -- (0,22) node[right] {\Huge{$\ 0$}}; 
\draw [ultra thick] (0,22) -- (-2,22.5); 
\draw [ultra thick] (-2,22.5) -- (-6,24); 
\draw [ultra thick] (-6,24) -- (-9,19);

\end{tikzpicture}}
\caption{The visualization of the KCBS example through its bundle diagram. Every local section is a swap, implying that for an odd number of sides there isn't any well defined global event.}
    \label{KCBS Fig}
\end{figure}

\begin{table}
  \centering
\begin{tabular}{|c|c|c|c|c|}
\hline\noalign{\smallskip}
& \large{$00$} & \large{$01$} & \large{$10$} & \large{$11$} \\
\noalign{\smallskip}\hline\noalign{\smallskip}
\large{$ab$} & \large{$0$} & \large{$\frac{1}{2}$} & \large{$\frac{1}{2}$} & \large{$0$} \\
\noalign{\smallskip}\hline\noalign{\smallskip}
\large{$bc$} & \large{$0$} & \large{$\frac{1}{2}$} & \large{$\frac{1}{2}$} & \large{$0$} \\
\noalign{\smallskip}\hline\noalign{\smallskip}
\large{$cd$} & \large{$0$} & \large{$\frac{1}{2}$} & \large{$\frac{1}{2}$} & \large{$0$} \\
\noalign{\smallskip}\hline\noalign{\smallskip}
\large{$de$} & \large{$0$} & \large{$\frac{1}{2}$} & \large{$\frac{1}{2}$} & \large{$0$} \\
\noalign{\smallskip}\hline\noalign{\smallskip}
\large{$ea$} & \large{$0$} & \large{$\frac{1}{2}$} & \large{$\frac{1}{2}$} & \large{$0$} \\
\noalign{\smallskip}\hline
\end{tabular}
    \caption{Table of outcome probabilities of each context of the KCBS example.}
    \label{KCBS}
\end{table}
\end{example}

As one can see, at the level of the event bundle all the previous examples are topologically equivalent for both the discrete and the geometrical representation topologies. The first erases any data about the contexts, while the second captures the cyclic nature of the base. The topology of the fibers must be the trivial one, such that one can represent the previous examples as fiber bundles and the nonexistence of a description with global events a natural consequence of the non-trivial topology of the fiber bundle. At this point the reader must be alerted that this reasoning only works because of the triviality of these examples, where the global events are distinguishable and there is a clear topological translation. Next examples will lose such direct topological interpretation.

\subsection{Noncontextual behaviour}

\label{IIE}

There are different ways to formalize contextuality, all as the violation of some noncontextual notion previously defined. A natural one follows from seeking to include hidden variables to explain the empirical model, which is like including auxiliary variables to correct data inconsistencies and to resolve non-fundamental paradoxes \cite{abramsky2020contextuality}.

\begin{definition}[Noncontextuality by model]
An empirical model is noncontextual by model if the measure $\mu^{U}$ depends on each of the measurements that make up $U$ independently, except for hidden variables that must be statistically considered, i.e.
\begin{equation}
\mu^{U}=\int_{\Lambda}\left(p\left(\lambda\right)\prod_{V_{i}\in \mathcal{C}^{-}|_{U}}\mu^{\left(V_{i},\lambda\right)}\right)d\lambda
\end{equation}
where it is defined as the average of hidden variables in space $\Lambda$ with weight $p$ satisfying $\int_{\Lambda}p(\lambda)=1$. The measure on $\textbf{O}^{U}$ couples the measures $\mu^{V_{i}}$ of the fibers $\textbf{O}^{V_{i}}$, with $\bigcup_{i}V_{i}=U$ the decomposition of $U$ in its measurements, in the canonical way.
\end{definition}

This noncontextual model definition assumes two properties of the allowed hidden-variable model, as punctuated in Ref. \cite{barbosa2019continuousvariable}. The first is $\lambda$-independence, when the measure is context-independent, and the second is parameter-independence, when no-disturbance of the hidden variable model holds. Both and more properties of hidden variable models are studied in Ref. \cite{Brandenburger_2008}\footnote{What this definition is saying is that a noncontextual model can be understood as a coarse-graining of a measure bundle with contexts $(U,\lambda)$ such that the coupling of all the measures is globally trivial.}.

Another way to define noncontextuality is by defining an unique function that explains all the empirical model. One can conceive such a definition as asking a physical system a question through a measurement without interfering with the outcome of the result, as a reality that just needs to be discovered, but because of the contexts one cannot see it directly.

\begin{definition}[Noncontextuality by section]
An empirical model is said noncontextual by section if every local section can be extended to a global section.
\end{definition}

The notion of contextuality by section originates from the sheaf approach to contextuality \cite{Abramsky_2011}, but it was first written for bundles in the approach presented in Ref. \cite{Terra_2019_FibreBundle}.

Finally, one can start with a global measure that can be marginalized to get the measure of each context, as it is supposed in a classical model. Inspired by this there is the definition of noncontextuality as the existence of such a global measure.

\begin{definition}[Noncontextuality by marginals]
An empirical model is noncontextual by marginals if there is a measure $\mu^{X}$, defined on all the measurements set, such that we can write any measure as $\mu^{U}=\mu^{X}|_{U}$ for all $U\in\mathcal{C}$.
\end{definition}

The next theorem, well known in the sheaf approach to contextuality \cite{Abramsky_2011}, is a generalization of the same result presented in Ref. \cite{Terra_2019_FibreBundle} in the fiber bundle approach. Here we generalized the fibers to any measurable space, not just finite ones. The demonstration is inspired by the same theorem presented in Ref. \cite{barbosa2019continuousvariable}.

\begin{theorem}
[Fine-Abramsky-Brandenburger] Given an empirical model on a scenario $\textbf{X}$, the following statements are equivalent:
\begin{enumerate}
    \item the model is noncontextual by model;
    \item the model is noncontextual by marginals;
    \item the model is noncontextual by section.
\end{enumerate}
\label{FAB}
\end{theorem}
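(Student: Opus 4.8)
The plan is to establish the two substantive equivalences $(1)\Leftrightarrow(2)$ and $(2)\Leftrightarrow(3)$; the first carries the genuine Fine--Abramsky--Brandenburger content, relating a factorizable hidden-variable model to a global $R$-measure, while the second is obtained by unwinding the event-fibration axioms. Throughout I would rely on the local independent gluing $\textbf{O}^{X}=\textbf{O}^{U}\times\textbf{O}^{X-U}$, which lets marginalization onto a context be described fiberwise, and on the non-disturbance of $P$ to guarantee that the local data are mutually compatible.

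For $(1)\Rightarrow(2)$ I would build the global $R$-measure on $\textbf{O}^{X}=\prod_{V_{i}\in\mathcal{C}^{-}}\textbf{O}^{V_{i}}$ directly from the by-model factorization,
\begin{equation}
\mu_{P}^{X}=\int_{\Lambda}\left(p(\lambda)\prod_{V_{i}\in\mathcal{C}^{-}}\mu_{P}^{(V_{i},\lambda)}\right)d\lambda .
\end{equation}
Normalization is immediate since each fiber factor integrates to $1_{R}$ and $\int_{\Lambda}p(\lambda)\,d\lambda=1_{R}$. To verify $\mu_{P}^{X}|_{U}=\mu_{P}^{U}$ I marginalize over the minimal contexts not contained in $U$: by local independent gluing each such factor collapses to $1_{R}$, and after a Fubini-type interchange of the $\Lambda$-integral with the fiber marginalization what remains is precisely the factorized integral that the by-model definition equates with $\mu_{P}^{U}$. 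Thus $\mu_{P}^{X}$ witnesses non-contextuality by marginals.

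For $(2)\Rightarrow(1)$ I would exhibit the canonical deterministic hidden-variable model. Taking the hidden-variable space to be the global outcome space $\Lambda=O^{X}$ equipped with $\Sigma^{X}$, the weight $p=\mu_{P}^{X}$, and the conditioned fiber measures the Dirac measures $\mu_{P}^{(V_{i},\lambda)}=\delta_{\lambda|_{V_{i}}}$, the product $\prod_{V_{i}}\delta_{\lambda|_{V_{i}}}$ is the Dirac measure at $\lambda$ on $\textbf{O}^{X}$, whose restriction to $U$ is $\delta_{\lambda|_{U}}$. Integrating against $p=\mu_{P}^{X}$ returns $\mu_{P}^{X}|_{U}=\mu_{P}^{U}$, which is exactly the by-model factorization, so $(1)$ holds. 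The equivalence $(2)\Leftrightarrow(3)$ then follows by reading $\mu_{P}^{X}$ as a section of the $R$-fibration over all of $X$: under $(2)$ its restriction reproduces every local section $\mu_{P}^{U}$, so each local section extends and $(3)$ holds; conversely, collecting the global extensions provided by $(3)$ and invoking local agreement, uniqueness, and non-disturbance shows that they assemble into a single consistent $\mu_{P}^{X}$ with the required marginals.

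I expect the main obstacle to be measure-theoretic rather than combinatorial. Because $R$ is only a semi-ring---possibly without subtraction, division, or any compatible topology (e.g. the Boolean or probability semi-ring)---the notions of $R$-measure, $R$-integral, product measure, and marginalization cannot be imported wholesale from classical measure theory, and the Fubini-type interchange used in $(1)\Rightarrow(2)$ must be justified by hand at this level of generality, for hidden-variable spaces $\Lambda$ and fibers $\textbf{O}^{m}$ that need not be finite. Making the canonical construction of $(2)\Rightarrow(1)$ rigorous likewise requires that $O^{X}$ with the product $\sigma$-algebra be a legitimate measurable space and that $p=\mu_{P}^{X}$ define a valid $R$-state on it. This is exactly the machinery developed for continuous fibers in Ref.~\cite{barbosa2019continuousvariable}, which I would adapt to the present fibration setting.
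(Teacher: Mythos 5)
Your proposal is correct and follows essentially the same route as the paper: the same factorized integral $\int_{\Lambda}p(\lambda)\prod_{V_{i}\in\mathcal{C}^{-}}\mu_{P}^{(V_{i},\lambda)}\,d\lambda$ builds the global measure for $(1)\Rightarrow(2)$, and the same canonical deterministic model (hidden-variable space $\Lambda=\textbf{O}^{X}$, weight $\mu_{P}^{X}$, Dirac kernels that factorize over minimal contexts) recovers $(1)$. The only difference is bookkeeping: the paper closes a cycle $1\to 2\to 3\to 1$, running the Dirac construction from the global section, whereas you use $(2)$ as a hub and prove $(3)\Leftrightarrow(2)$ separately --- but since the paper's $3\to 1$ step first extracts $\mu_{P}^{X}$ from the global section anyway, the two arrangements are the same argument in different order.
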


\begin{proof}
($1\to 2$) If we have an empirical model that is noncontextual by model, then for any context $U\in\mathcal{C}$ holds
\begin{align}
\begin{split}
\mu^{U} &=\int_{\Lambda}\left(p\left(\lambda\right)\prod_{V_{i}\in \mathcal{C}^{-}|_{U}}\mu^{\left(V_{i},\lambda\right)}\right)d\lambda\\
&=\int_{\Lambda}dp\left(\lambda\right)\prod_{V_{i}\in \mathcal{C}^{-}|_{U}}\mu^{\left(V_{i},\lambda\right)}.
\end{split}
\end{align}

By the same reason, we can write
\begin{equation}
\mu^{X}=\int_{\Lambda}dp\left(\lambda\right)\prod_{V_{i}\in \mathcal{C}^{-}}\mu^{\left(V_{i},\lambda\right)}
\end{equation}
defining a measure on $X$. The marginals follow from
\begin{align}
\begin{split}
\mu^{X}|_{U}&=\int_{\Lambda}dp\left(\lambda\right)\prod_{V_{i}\in\mathcal{C}^{-}}\mu^{\left(V_{i},\lambda\right)}|_{U}\\
&=\int_{\Lambda}dp\left(\lambda\right)\prod_{V_{i}\in\mathcal{C}^{-}|_{U}}\mu^{\left(V_{i},\lambda\right)}\\
&=\mu^{U},
\end{split}
\end{align}
therefore, it is a model noncontextual by marginals.

($2\to 3$) Given a measure $\mu^{X}$ of $X$, and let $s_{U}$ be a local section, $\pi\left(s_{U}\right)\left(V\right)=V$ for all $V\subset U$, $V\in\mathcal{C}$. As 
\begin{equation}
s_{U}\left(V\right)=\left(V,\textbf{O}^{V},\mu^{U}|_{V}\right)
\end{equation}
we can extend the section by
\begin{equation}
s_{X}\left(V\right)=\left(V,\textbf{O}^{V},\mu^{X}|_{V}\right)
\end{equation}
where obviously holds for all $U'\in\mathbb{P}\left(\mathcal{C}^{-}\right)$
\begin{equation}
s_{X}\left(U'\right)=\left(U',\textbf{O}^{U'},\mu^{X}|_{U'}\right).
\end{equation}
This is especially true for $U'=X$, and therefore the model is noncontextual by section.

($3\to 1$) If we have an empirical model that is noncontextual by section, then for any $U\in\mathbb{P}(\mathcal{C}^{-})$ holds
\begin{equation}
s_{X}\left(U\right)=\left(U,\textbf{O}^{U},\mu^{X}|_{U}\right).
\end{equation}
We always can write a measure on $U$ as
\begin{equation}
\mu^{U}(\sigma_{U})=\int_{\Lambda}dp(\lambda)K_{U}(\lambda,\sigma_{U})
\end{equation}
with any $\sigma_{U}\in\Sigma^{U}$, and $K$ the kernel for $U$. Choosing $\Lambda=O^{X}$, $p=\mu^{X}$, and defining $K_{U}(\lambda,\sigma_{U})=\delta_{\lambda|_{U}}(\sigma_{U})$ and the restriction function $f\left(\lambda\right)=\lambda|_{U}\in O^{U}$, we have
\begin{align}
\begin{split}
    \int_{\Lambda}dp(\lambda)K_{U}(\lambda,\sigma_{U}) 
    &=\int_{O^{X}}d\mu^{X}(\lambda)K_{U}(\lambda,\sigma_{U})\\
    &=\int_{O^{X}}d\mu^{X}(\lambda)\delta_{f\left(\lambda\right)}(\sigma_{U})\\
    &=\int_{O^{X}}d\mu^{X}(\lambda)\chi_{\sigma_{U}}\circ f(\lambda)\\
    &=\int_{O^{U}}d\left(f^{*}\mu^{X}\right)(\lambda)\chi_{\sigma_{U}}(x)\\
    &=\int_{O^{U}}d\mu^{X}(\lambda)|_{U}\chi_{\sigma_{U}}(x)\\
    &=\mu^{X}|_{U}(\sigma_{U}).
\end{split}
\end{align}
with $x\in O^{U}$ and where $\chi_{\sigma_{U}}(\lambda)=\delta_{\lambda|_{U}}(\sigma_{U})$ is the characteristic function. This tells us that the model we got is deterministic. It follows it is factorizable, since the Dirac measure $\delta_{\lambda|_{U}}$ factorizes as a product of Dirac measures,
\begin{equation}
\delta_{\lambda|_{U}}=\prod_{V_{i}\in \mathcal{C}^{-}|_{U}}\delta_{\lambda|_{U}|_{V_{i}}}=\prod_{V_{i}\in \mathcal{C}^{-}|_{U}}\delta_{\lambda|_{V_{i}}}.
\end{equation}
Being $\sigma_{U}=\prod_{{V_{i}\in \mathcal{C}^{-}|_{U}}}\sigma_{V_{i}}$, we can write
\begin{align}
\begin{split}
K_{U}\left(\lambda,\sigma_{U}\right)&=\delta_{\lambda|_{U}}(\sigma_{U})\\
&=\prod_{V_{i}\in \mathcal{C}^{-}|_{U}}\delta_{\lambda|_{V_{i}}}\left(\sigma_{V_{i}}\right)\\
&=\prod_{V_{i}\in\mathcal{C}^{-}|_{U}}K_{V_{i}}\left(\lambda,\sigma_{V_{i}}\right)
\end{split}
\end{align}
therefore, we can finally define $K_{V_{i}}\left(\lambda,\sigma_{V_{i}}\right)=\mu^{\left(V_{i},\lambda\right)}\left(\sigma_{V_{i}}\right)$ and as it also holds for $X$, we conclude the model is noncontextual by model.
\end{proof}

During the previous demonstration, we showed that a model is noncontextual if and only if it can be factorizable. This property when dealing with probabilistic measures is equivalent to Bell’s locality \cite{PhysicsPhysiqueFizika.1.195}, as explained in Ref. \cite{Abramsky_2011}. The theorem above thus relates Bell’s nonlocality as a special case of the contextuality of the respective empirical model. For our case with non-finite fibers, we generalize this identification for cases of a continuous spectrum. Another important point is the role of outcome-determinism by choosing $K_{U}(\lambda,\sigma_{U})=\delta_{\lambda|_{U}}(\sigma_{U})$. This imposition restricts the kernels one can use, restricting the previous theorem to ideal measurements \cite{Spekkens_2014}.

\subsection{Examples of contextuality in measure bundles}
\label{IIF}

The examples that will be presented here have finite outcome fibers and are non-disturbing. To check the contextual behavior, we will use the contextual fraction.

The contextual fraction is a measure of contextuality, based on the fact that a noncontextual model can be written as a convex combination of global events. The finite version was present in Ref. \cite{Abramsky_2017}, and the more general case of non-finite fibers in Ref. \cite{barbosa2019continuousvariable}.

Formally, each measurable bundle has an incidence matrix \begin{equation}
    M\left(\sigma_{U},\sigma_{\mathcal{B}}\right)=\begin{cases}
      1\text{ if } \sigma_{\mathcal{B}}|_{U}=\sigma_{U};\\
      0\text{ otherwise}.
    \end{cases}  
\end{equation}
It has the possible global events indexing the columns and the possible local events of maximal contexts indexing the rows, such that the entry will only be non-null if the local event is a restriction to the context of the global event. A non-disturbing and deterministic model to be noncontextual is equivalent to it be described as a convex combination of its global events, and therefore we will have a weight $b_{i}$ for each of them, which should add $1$. It is equivalent to
\begin{equation}
    M\vec{b}:=\vec{p}
\end{equation}
where $\vec{p}$ is the vector of the probabilities of the outcome in each context (certain care must be taken so that the vector, which originates in the usual probability table, is in the correct position about the entry in the incidence matrix).

The noncontextual fraction $NCF$ is the maximum value of $\sum_{i}b_{i}$ such that $b_{i}\geq 0$ and $\sum_{j}M_{ij}b_{j}\leq p_{i}$, i.e., 
\begin{equation}
    NCF:=\max_{\vec{b}}\left\{\sum_{i}b_{i};b_{i}\geq 0,\sum_{j}M_{ij}b_{j}\leq p_{i}\right\}.
\end{equation}
The contextual fraction is then defined as $CF=1-NCF$.\footnote{The contextual fraction is related to the contextuality inequalities in the literature and has the necessary properties to be considered a "good" measure of contextuality. In case of interest in such properties, the references already mentioned have more details, and Ref. \cite{amaral2017geometrical} has a good review.}

\begin{example}[Trivial] 
The example of Table \ref{Trivial bundle} can be describe as the combination of two global events,
\begin{equation}
    s_{X}^{trivial}=\left\{\frac{1}{2}(abcde\to 00000)+\frac{1}{2}(abcde\to 11111)\right\}
\end{equation}
defining a global section of the probabilistic bundle. By Theorem \ref{FAB}, the example is noncontextual, in agreement with noncontextual fraction $NCF=1$.
\end{example}

\begin{example}[Liar cycle]
The example of Table \ref{Liar cycle} can’t be described as a combination of global events, thus there isn’t a global section that describes it. The cause is the swap in the context $ea$, giving a contextual behaviour for this probabilistic bundle, with noncontextual fraction $NCF=0$. A liar cycle is an example of strong contextuality \cite{Abramsky_2011}, once there is not any well-defined global event.
\end{example}

\begin{example}[KCBS]
The example of Table \ref{KCBS} follows the same reasoning as the previous one. A combination of global events can’t describe it, because of an odd number of swaps. The noncontextual fraction is $NCF=0$, demonstrating the contextual behavior of this example.
\end{example}

\begin{example}[Hardy]
The example of Table \ref{Hardy} is more complicated, but one still can directly see if it is contextual or not just using the events. It was first introduced in Ref. \cite{Cabello_2013} as a Hardy-like model \cite{PhysRevLett.68.2981,PhysRevLett.71.1665} with contextual behavior in quantum theory. The model is non-disturbing, with noncontextual fraction $NCF=\frac{7}{9}$, thus being contextual. It has global events, starting in $(ab\to 00)$, $(cd\to 00)$ and $(ea\to 00)$, but there are events that can’t be extended to global ones, $(ab\to 01)$, $(ab\to 10)$, $(cd\to 01)$, $(cd\to 10)$, $(ea\to 01)$ and $(ea\to 10)$. A model that presents global events but also presents non-extendable local events is called possibilistic contextual \cite{Abramsky_2011}, as this example.

\begin{figure}
  \centering
\scalebox{0.3}{
\begin{tikzpicture}
\draw [ultra thick] (-9,12) -- (-3,10) node[right] {\Huge{$\ a$}};
\draw [ultra thick] (-3,10) -- (0,13) node[right] {\Huge{$\ b$}}; 
\draw [ultra thick] (0,13) -- (-2,15.5) node[right] {\Huge{$\ c$}}; 
\draw [ultra thick] (-2,15.5) -- (-6,15)
node[left] {\Huge{$d\ $}}; 
\draw [ultra thick] (-6,15) -- (-9,12) node[left] {\Huge{$e\ $}};
\filldraw [black] (-9,12) circle (4pt);
\filldraw [black] (-3,10) circle (4pt);
\filldraw [black] (0,13) circle (4pt);
\filldraw [black] (-2,15.5) circle (4pt);
\filldraw [black] (-6,15) circle (4pt);
\draw[loosely dotted, ultra thick] (-9,12) -- (-9,21);
\draw[loosely dotted, ultra thick] (-3,10) -- (-3,19);
\draw[loosely dotted, ultra thick] (0,13) -- (0,22);
\draw[loosely dotted, ultra thick] (-2,15.5) -- (-2,24.5);
\draw[loosely dotted, ultra thick] (-6,15) -- (-6,24);
\filldraw [black] (-9,21) circle (4pt);
\filldraw [black] (-3,19) circle (4pt);
\filldraw [black] (0,22) circle (4pt);
\filldraw [black] (-2,24.5) circle (4pt);
\filldraw [black] (-6,24) circle (4pt);
\filldraw [black] (-9,19) circle (4pt);
\filldraw [black] (-3,17) circle (4pt);
\filldraw [black] (0,20) circle (4pt);
\filldraw [black] (-2,22.5) circle (4pt);
\filldraw [black] (-6,22) circle (4pt);
\draw [ultra thick] (-9,19) -- (-3,19);
\draw [ultra thick] (-3,19) -- (0,20) node[right] {\Huge{$\ 1$}}; 
\draw [ultra thick] (0,20) -- (-2,24.5);
\draw [ultra thick] (-2,24.5) -- (-6,22); 
\draw [ultra thick] (-6,22) -- (-9,21);
\draw [ultra thick] (-9,21) -- (-3,17);
\draw [ultra thick] (-3,17) -- (0,22) node[right] {\Huge{$\ 0$}}; 
\draw [ultra thick] (0,22) -- (-2,22.5);
\draw [ultra thick] (-2,22.5) -- (-6,24); 
\draw [ultra thick] (-6,24) -- (-9,19);
\draw [ultra thick] (-9,21) -- (-3,19);
\draw [ultra thick] (-3,19) -- (0,22);
\draw [ultra thick] (-2,24.5) -- (-6,24);
\end{tikzpicture}}
\caption{The visualization of the Hardy example through its bundle diagram. This model has global events, but some local events can’t be extended to a global one.}
\label{Hardy Fig}
\end{figure}

\begin{table}
  \centering
\begin{tabular}{|c|c|c|c|c|}
\hline\noalign{\smallskip}
& \large{$00$} & \large{$01$} & \large{$10$} & \large{$11$} \\
\noalign{\smallskip}\hline\noalign{\smallskip}
\large{$ab$} & \large{$\frac{2}{9}$} & \large{$\frac{2}{3}$} & \large{$\frac{1}{9}$} & \large{$0$} \\
\noalign{\smallskip}\hline\noalign{\smallskip}
\large{$bc$} & \large{$0$} & \large{$\frac{1}{3}$} & \large{$\frac{2}{3}$} & \large{$0$} \\
\noalign{\smallskip}\hline\noalign{\smallskip}
\large{$cd$} & \large{$\frac{1}{3}$} & \large{$\frac{1}{3}$} & \large{$\frac{1}{3}$} & \large{$0$} \\
\noalign{\smallskip}\hline\noalign{\smallskip}
\large{$de$} & \large{$0$} & \large{$\frac{2}{3}$} & \large{$\frac{1}{3}$} & \large{$0$} \\
\noalign{\smallskip}\hline\noalign{\smallskip}
\large{$ea$} & \large{$\frac{2}{9}$} & \large{$\frac{1}{9}$} & \large{$\frac{2}{3}$} & \large{$0$} \\
\noalign{\smallskip}\hline
\end{tabular}
    \caption{Table of outcome probabilities of each context of the Hardy example.}
    \label{Hardy}
\end{table}
\end{example}

\begin{example}[Bell]
The model of Table \ref{Bell} is the usual example of a non-disturbing contextual model in quantum mechanics. It is also known as the Bell-CHSH model, by its application in the Clauser-Horne-Shimony-Holt inequality and for being an important example of bipartite scenario that shows non-local behavior \cite{bell_aspect_2004,PhysRevLett.23.880}. The $ab$ local section is trivial, but the others have probabilities that remember the liar cycle. In this sense, this model can be understood as a combination of the trivial example and three liars. Such decomposition agrees with the noncontextual fraction $NCF=\frac{3}{4}$, therefore the Bell example is a contextual model. As one can see by its bundle diagram, all events are defined as a restriction of global events. Therefore, it is an example of probabilistic contextuality \cite{Abramsky_2011}, once contextual behaviour only appears when considering the measures.

\begin{figure}
\centering
\scalebox{0.3}{
\begin{tikzpicture}
\draw [ultra thick] (-9,12) -- (-3,10) node[right] {\Huge{$\ a$}};
\draw [ultra thick] (-3,10) -- (0,13) node[right] {\Huge{$\ b$}}; 
\draw [ultra thick] (0,13) -- (-6,15)
node[left] {\Huge{$c\ $}};
\draw [ultra thick] (-6,15) -- (-9,12) node[left] {\Huge{$d\ $}};
\filldraw [black] (-9,12) circle (4pt);
\filldraw [black] (-3,10) circle (4pt);
\filldraw [black] (0,13) circle (4pt);
\filldraw [black] (-6,15) circle (4pt);
\draw[loosely dotted, ultra thick] (-9,12) -- (-9,21);
\draw[loosely dotted, ultra thick] (-3,10) -- (-3,19);
\draw[loosely dotted, ultra thick] (0,13) -- (0,22);
\draw[loosely dotted, ultra thick] (-6,15) -- (-6,24);
\filldraw [black] (-9,21) circle (4pt);
\filldraw [black] (-3,19) circle (4pt);
\filldraw [black] (0,22) circle (4pt);
\filldraw [black] (-6,24) circle (4pt);
\filldraw [black] (-9,19) circle (4pt);
\filldraw [black] (-3,17) circle (4pt);
\filldraw [black] (0,20) circle (4pt);
\filldraw [black] (-6,22) circle (4pt);
\draw [ultra thick] (-9,19) -- (-3,19);
\draw [ultra thick] (-3,17) -- (0,20) node[right] {\Huge{$\ 1$}}; 
\draw [ultra thick] (0,20) -- (-6,22); 
\draw [ultra thick] (-6,22) -- (-9,21);
\draw [ultra thick] (-9,21) -- (-3,17);
\draw [ultra thick] (-3,19) -- (0,22) node[right] {\Huge{$\ 0$}}; 
\draw [ultra thick] (0,22) -- (-6,24); 
\draw [ultra thick] (-6,24) -- (-9,19);
\draw [ultra thick] (-9,21) -- (-3,19);
\draw [ultra thick] (-9,19) -- (-3,17);
\draw [ultra thick] (-6,22) -- (-9,19);
\draw [ultra thick] (-6,24) -- (-9,21);
\draw [ultra thick] (0,22) -- (-6,22); 
\draw [ultra thick] (0,20) -- (-6,24); 
\end{tikzpicture}}
\caption{The visualization of the Bell example through its bundle diagram. This model has only global events, and the contextual behaviour only appears because of the impossibility to explain it with positive real numbers.}
    \label{Bell Fig}
\end{figure}

\begin{table}
  \centering
\begin{tabular}{|c|c|c|c|c|}
\hline\noalign{\smallskip}
& \large{$00$} & \large{$01$} & \large{$10$} & \large{$11$} \\
\noalign{\smallskip}\hline\noalign{\smallskip}
\large{$ab$} & \large{$\frac{1}{2}$} & \large{$0$} & \large{$0$} & \large{$\frac{1}{2}$} \\
\noalign{\smallskip}\hline\noalign{\smallskip}
\large{$bc$} & \large{$\frac{3}{8}$} & \large{$\frac{1}{8}$} & \large{$\frac{1}{8}$} & \large{$\frac{3}{8}$} \\
\noalign{\smallskip}\hline\noalign{\smallskip}
\large{$cd$} & \large{$\frac{3}{8}$} & \large{$\frac{1}{8}$} & \large{$\frac{1}{8}$} & \large{$\frac{3}{8}$} \\
\noalign{\smallskip}\hline\noalign{\smallskip}
\large{$da$} & \large{$\frac{1}{8}$} & \large{$\frac{3}{8}$} & \large{$\frac{3}{8}$} & \large{$\frac{1}{8}$} \\
\noalign{\smallskip}\hline
\end{tabular}
    \caption{Table of outcome probabilities of each context of the Bell example.}
    \label{Bell}
\end{table}
\end{example}

\section{Topology of measure bundles}
\label{III}

\subsection{Necessary condition to contextual behavior}
\label{IIIA}

A result discovered before the notion of contextuality being defined \cite{Vorobyev_1962}, known as Vorob'ev's theorem, is the characterization of a necessary but not sufficient condition in the scenario to be part of a contextual empirical model. It is linked with the notion of cyclicity, in the sense of Graham’s decomposition.

\begin{definition}
A hypergraph $\left(X,\mathcal{C}\right)$ is acyclic if it can be reduced to the empty set through the Graham’s decomposition, the algorithm given by the repeated application of the following operations:
\begin{itemize}
\item if $m\in X$ such that it belongs to a single hyperedge, then delete $m$ from this hyperedge;
\item if $V \subsetneq U$, with $V,U \in \mathcal{C}$, then delete $V$ from $\mathcal{C}$.
\end{itemize}
\end{definition}

The Graham’s decomposition algorithm can be interpreted as a coarse-graining of contexts, “forgetting” contexts that can be described by marginalization. Contextuality depends on maximal contexts and their intersections, as one can see by the contextual fraction algorithm. Graham’s decomposition deletes measurements that are not intersections and contexts that are not maximal, with the detail of also erasing the minimal context of each measurement from the hypergraph. In this sense, acyclic scenarios allow themselves to be simplified arbitrarily, while cyclic scenarios do not. An important fact is that Graham’s decomposition does not preserve the simplicial complex structure of the scenario as showed in Ref. \cite{barbosa2015contextuality}. The property of being cyclic, although it appears to be a topological property in the sense of capturing “holes” in the hypergraph, is not a topological invariant, as one can see in the next example.

\begin{example}
\label{ex9}
The first triangle of the Graham process in Fig. \ref{Vorobev1.png} is simply connected when equipped with the topology of its geometrical representation and therefore collapsible, as well as being acyclic when thought of as an hypergraph as shown. 
\begin{figure}[!ht]
\centering
\scalebox{0.6}{
\begin{tikzpicture}
\fill[gray] (-8,3) -- (-6,3) -- (-7,4.64) -- (-8,3);
\draw [ultra thick] (-8,3) -- (-6,3) -- (-7,4.64) -- (-8,3);
\filldraw [black] (-8,3) circle (4pt);
\filldraw [black] (-6,3) circle (4pt);
\filldraw [black] (-7,4.64) circle (4pt);
\draw [->, ultra thick, snake=snake, line after snake=1mm, segment amplitude=2mm] (-5,4) -- (-4,4);
\fill[gray] (-3,3) -- (-1,3) -- (-2,4.64) -- (-3,3);
\filldraw [black] (-3,3) circle (4pt);
\filldraw [black] (-1,3) circle (4pt);
\filldraw [black] (-2,4.64) circle (4pt);
\draw [->, ultra thick, snake=snake, line after snake=1mm, segment amplitude=2mm] (0,4) -- (1,4);
\node at (2,4) {\Huge{$\emptyset$}};
\end{tikzpicture}}
    \caption{Graham decomposition of a filled triangle. The steps are: to exclude the edges; to exclude the isolated vertices. The result is the empty set, therefore the hypergraph is acyclic.}
    \label{Vorobev1.png}
\end{figure}
In the case of the triangle with barycentric subdivision, the first of the Graham process in Fig. \ref{Vorobev0.png}, despite being simply connected when equipped with the topology of its geometrical representation and therefore collapsible, being homeomorphic to the previous case, it is not an acyclic hypergraph and can be a measurable scenario for an empirical model as we will see in a following example. 
\begin{figure}[!ht]
\centering
\scalebox{0.6}{
\begin{tikzpicture}
\fill[gray] (-8,3) -- (-6,3) -- (-7,4.64) -- (-8,3);
\draw [ultra thick] (-8,3) -- (-6,3) -- (-7,4.64) -- (-8,3);
\filldraw [black] (-8,3) circle (4pt);
\filldraw [black] (-6,3) circle (4pt);
\filldraw [black] (-7,4.64) circle (4pt);
\draw [ultra thick] (-7,3.6) -- (-8,3);
\draw [ultra thick] (-7,3.6) -- (-6,3);
\draw [ultra thick] (-7,3.6) -- (-7,4.64);
\filldraw [black] (-7,3.6) circle (4pt);
\draw [->, ultra thick, snake=snake, line after snake=1mm, segment amplitude=2mm] (-5,4) -- (-4,4);
\fill[gray] (-3,3) -- (-1,3) -- (-2,4.64) -- (-3,3);
\draw [ultra thick, white] (-2,3.6) -- (-3,3);
\draw [ultra thick, white] (-2,3.6) -- (-1,3);
\draw [ultra thick, white] (-2,3.6) -- (-2,4.64);
\filldraw [black] (-3,3) circle (4pt);
\filldraw [black] (-1,3) circle (4pt);
\filldraw [black] (-2,4.64) circle (4pt);
\filldraw [black] (-2,3.6) circle (4pt);
\draw [->, ultra thick, snake=snake, line after snake=1mm, segment amplitude=2mm] (0,4) -- (1,4);
\node at (2,4) {\Huge{\textbf{?}}};
\end{tikzpicture}}
\caption{Graham decomposition of a triangle with barycentric subdivision. The first and unique allowed step is to exclude the edges. There isn’t any other possible step, therefore the hypergraph is cyclic.}
\label{Vorobev0.png}
\end{figure}
\end{example}

This is a marked difference from the fiber bundle theory, where having a topologically trivial base (when equipped with a group structure) implies a topologically trivial total space \cite{Schwarz1994}. It also shows that the role of topology in contextuality, if any, should be more subtle.

We can rewrite Vorob’ev theorem as follows.

\begin{theorem}[Vorob’ev theorem]
Given base $\mathbf{X}$, any non-disturbing measure bundle defined on it is noncontextual if and only if its base seen as a hypergraph is acyclic.
\end{theorem} 

In short, contextuality does not follow directly from the standard topology of the scenario\footnote{But it can be related to topological aspects of the empirical model \cite{Okay_2020}.}, even with aciclicity structurally linked with noncontextuality \cite{Karvonen_2019}. The example in the Fig. \ref{Vorobev1.png} and Fig. \ref{Vorobev0.png} not only show independence in terms of topology, by detaching the topology of the simplicial complex from its ability to be a base for a contextual measure bundle, it also indicates that including freedom (the vertex in the center of the barycentric triangle) suffices to allow such behavior.

\subsection{n-contextuality}

\label{IIIB}

Given a measure bundle and once identified the simplicial complex structure of scenarios, one can ask the relation between the inductive definition of a simplicial complex, i.e. by increasing the dimension of the simplices to be included in each step, and the contextuality shown by the bundle for each step of such induction of the base. 

\begin{definition}
    Given a empirical model, its $0$-skeleton is given by the set of measurements $X$ and their sections; its $n$-skeleton for $n\geq 1$ is given by the set of measurements $X$ and the contexts up to $n+1$ measurements, and their sections.
\end{definition}

One can also define such construction as the marginalization of the empirical model on the set of contexts in the $n$-skeleton. The following definition addresses the question of contextuality induced by the $n$-skeleton by using contextual fraction.

\begin{definition}
Let $CF_{n}$ be the contextual fraction of the measure bundle induced by the marginalization of a given empirical model on the $n$-skeleton of its base. The $n$-contextuality is the difference $CF_{n}-CF_{n-1}$ between two steps of the inductive definition of the scenario.
\end{definition}

The concept of $n$-contextuality is well-defined once for vertices the scenario is acyclic, and therefore noncontextual, $CF_{0}=0$, and also because of the monotony with relation to free operations of a resource theory of contextuality \cite{amaral2019resource}. In special, the sum of the contextual fractions of disjoint sub-models is less than the contextual fraction of the entire model: any empirical model given by two disjoint parts $\mathcal{M}=\mathcal{M}_1\sqcup\mathcal{M}_2$ satisfies
\begin{equation}
CF(\mathcal{M})\geq CF(\mathcal{M}_1)+CF(\mathcal{M}_2)
\end{equation}
One can see that it generalizes to any finite disjoint union of sub-models. By construction of the contextual fraction, the following holds
\begin{equation}
    1\geq CF(\mathcal{M})\geq\sum_{k=1}^n CF(\mathcal{M}_k)
\end{equation}
for any empirical model $\mathcal{M}=\bigsqcup_{k=1}^n \mathcal{M}_k$. Therefore, adding edges can make contextuality appears, and we can define the sequence
\begin{equation}
    0=CF_0\leq CF_1\leq ...\leq CF_n=CF.
\end{equation}

$n$-contextuality is a way to explore the influence of topological properties of the scenario on the contextual behavior of the measure bundle, once one isolate by the dimension of the $n$-skeleton any topological property. A question in the literature \cite{Terra_2019_FibreBundle} between the topology of scenarios and acyclicity is whether all contextual behavior appears because of topological failures of the scenario captured by the first homology group, equivalently if there is $n$-contextuality for $n>1$, once the $1$-skeleton captures any property of the first homology group of the scenario. Previously we saw an example that split the notion of standard topology and cyclicity, but the question remains: for $n>1$, can contextuality appears purely due to $n$-dimensional objects? The next example shows that it does.

\begin{example}
\label{Tetra}
Let a scenario being given by the border of a tetrahedron is an example of a scenario that has non-trivial cyclicality, with only the four vertices, six edges, and four faces. Table \ref{tab: tetrahedron} has the probabilities of obtaining certain outcomes within maximal contexts, where we have the encoding of a bundle with identical fibers with two outputs $\left\{0,1\right\}$, and base with contexts represented by the boundary of a tetrahedron, with measurements $\left\{a,b,c,d\right\}$. It was construct from the Svetlichny’s box \cite{Barrett_2005} as presented in Ref. \cite{Soares_Barbosa_2014}, and described by the Table \ref{tab: Svetlichny} \footnote{This example was built independently by the author for this paper, but appeared before the publication of it in Ref. \cite{Dzhafarov_2020}.}
\begin{table*}\centering
\caption{Table of outcome probabilities of each context for the tetrahedron model.}
\label{tab: tetrahedron}
\begin{tabular}{|c|c|c|c|c|c|c|c|c|}
\hline\noalign{\smallskip}
& \large{$000$} & \large{$001$} & \large{$010$} & \large{$011$} & \large{$100$} & \large{$101$} & \large{$110$} & \large{$111$}\\
\noalign{\smallskip}\hline\noalign{\smallskip}
\large{$abc$} & \large{$\frac{1}{4}$} & \large{$0$} & \large{$0$} & \large{$\frac{1}{4}$} & \large{$0$} & \large{$\frac{1}{4}$} & \large{$\frac{1}{4}$} & \large{$0$}\\
\noalign{\smallskip}\hline\noalign{\smallskip}
\large{$abd$} & \large{$\frac{1}{4}$} & \large{$0$} & \large{$0$} & \large{$\frac{1}{4}$} & \large{$0$} & \large{$\frac{1}{4}$} & \large{$\frac{1}{4}$} & \large{$0$}\\
\noalign{\smallskip}\hline\noalign{\smallskip}
\large{$acd$} & \large{$0$} & \large{$\frac{1}{4}$} & \large{$\frac{1}{4}$} & \large{$0$} & \large{$\frac{1}{4}$} & \large{$0$} & \large{$0$} & \large{$\frac{1}{4}$}\\
\noalign{\smallskip}\hline\noalign{\smallskip}
\large{$bcd$} & \large{$0$} & \large{$\frac{1}{4}$} & \large{$\frac{1}{4}$} & \large{$0$} & \large{$\frac{1}{4}$} & \large{$0$} & \large{$0$} & \large{$\frac{1}{4}$}\\
\noalign{\smallskip}\hline
\end{tabular}
\end{table*}
As one can see, the probabilities in any subcontext represented by the edges are maximally random, implying noncontextuality when restricted to the $1$-skeleton. There is concordance between the faces, thus it is a non-disturbing empirical model. It is also a contextual empirical model once it has a non-null contextual fraction \cite{Abramsky_2017}, $CF=\frac{3}{4}$. In conclusion, the contextual behavior of this model does not originate from the $1$-skeleton but from the $2$-skeleton, thus this is an example of $2$-contextuality and a counter-example of the affirmation that contextuality follows, even indirectly, from the first holonomy group of the scenario with the standard topology of its geometrical representation.
\end{example}

Another question one can ask is whether quantum theory has any examples of $n$-contextuality for $n>1$. The next example shows quantum theory has a well-known example for all $n\geq 1$.

\begin{example}
The $n$ dimensional GHZ model is a scenario with $n+1$ parts, each with two measurements and fibers with two elements. For $n=1$ we have the Bell scenario, a graph with square form and an example of $1$-contextuality. For $n=2$ the scenario has an octahedron shape. The model is fixed when choosing the quantum state
\begin{equation}
    \ket{GHZ}=\frac{\ket{0}^{\otimes (n+1)}+\ket{1}^{\otimes (n+1)}}{\sqrt{2}}
\end{equation}
which is maximally entangled. By the Theorem \ref{FAB} and the structure of the measurements in quantum theory, it is always possible to find a set of measurements that has contextuality for this model. The marginalized measures are maximally uniform (a reflection of the state’s non-biseparability), and therefore the contextuality doesn’t appear in simplices of a smaller dimension than $n$. 
\end{example}

The case of the GHZ model in octahedron is well-known, and its usual version has the table of joint probabilities given in Table \ref{tab: GHZ}
\begin{table*}\centering
\caption{Table of outcome probabilities of each context for the GHZ model.}
\label{tab: GHZ}
\begin{tabular}{|c|c|c|c|c|c|c|c|c|}
\hline\noalign{\smallskip}
& \large{$000$} & \large{$001$} & \large{$010$} & \large{$011$} & \large{$100$} & \large{$101$} & \large{$110$} & \large{$111$}\\
\noalign{\smallskip}\hline\noalign{\smallskip}
\large{$ABC$} & \large{$0$} & \large{$\frac{1}{4}$} & \large{$\frac{1}{4}$} & \large{$0$} & \large{$\frac{1}{4}$} & \large{$0$} & \large{$0$} & \large{$\frac{1}{4}$}\\
\noalign{\smallskip}\hline\noalign{\smallskip}
\large{$ABc$} & \large{$\frac{1}{8}$} & \large{$\frac{1}{8}$} & \large{$\frac{1}{8}$} & \large{$\frac{1}{8}$} & \large{$\frac{1}{8}$} & \large{$\frac{1}{8}$} & \large{$\frac{1}{8}$} & \large{$\frac{1}{8}$}\\
\noalign{\smallskip}\hline\noalign{\smallskip}
\large{$AbC$} & \large{$\frac{1}{8}$} & \large{$\frac{1}{8}$} & \large{$\frac{1}{8}$} & \large{$\frac{1}{8}$} & \large{$\frac{1}{8}$} & \large{$\frac{1}{8}$} & \large{$\frac{1}{8}$} & \large{$\frac{1}{8}$}\\
\noalign{\smallskip}\hline\noalign{\smallskip}
\large{$Abc$} & \large{$\frac{1}{4}$} & \large{$0$} & \large{$0$} & \large{$\frac{1}{4}$} & \large{$0$} & \large{$\frac{1}{4}$} & \large{$\frac{1}{4}$} & \large{$0$}\\
\noalign{\smallskip}\hline\noalign{\smallskip}
\large{$aBC$} & \large{$\frac{1}{8}$} & \large{$\frac{1}{8}$} & \large{$\frac{1}{8}$} & \large{$\frac{1}{8}$} & \large{$\frac{1}{8}$} & \large{$\frac{1}{8}$} & \large{$\frac{1}{8}$} & \large{$\frac{1}{8}$}\\
\noalign{\smallskip}\hline\noalign{\smallskip}
\large{$aBc$} & \large{$\frac{1}{4}$} & \large{$0$} & \large{$0$} & \large{$\frac{1}{4}$} & \large{$0$} & \large{$\frac{1}{4}$} & \large{$\frac{1}{4}$} & \large{$0$}\\
\noalign{\smallskip}\hline\noalign{\smallskip}
\large{$abC$} & \large{$\frac{1}{4}$} & \large{$0$} & \large{$0$} & \large{$\frac{1}{4}$} & \large{$0$} & \large{$\frac{1}{4}$} & \large{$\frac{1}{4}$} & \large{$0$}\\
\noalign{\smallskip}\hline\noalign{\smallskip}
\large{$abc$} & \large{$\frac{1}{8}$} & \large{$\frac{1}{8}$} & \large{$\frac{1}{8}$} & \large{$\frac{1}{8}$} & \large{$\frac{1}{8}$} & \large{$\frac{1}{8}$} & \large{$\frac{1}{8}$} & \large{$\frac{1}{8}$}\\
\noalign{\smallskip}\hline
\end{tabular}
\end{table*}
which can be verified by contextual fraction that it is a strong contextual model ($NCF=0$). Its importance as a model has direct applications even in the foundations \cite{Lawrence_2023}, placing $n$-contextuality as an important phenomenon to validate interpretations of quantum theory. Another example also with strong contextuality, but super-quantum, that inspired the tetrahedron model in example \ref{Tetra} is the Svetlichny’s box of Table \ref{tab: Svetlichny}
\begin{table*}\centering
\caption{Table of outcome probabilities of each context for the Svetlichny’s box.}
\label{tab: Svetlichny}
\begin{tabular}{|c|c|c|c|c|c|c|c|c|}
\hline\noalign{\smallskip}
& \large{$000$} & \large{$001$} & \large{$010$} & \large{$011$} & \large{$100$} & \large{$101$} & \large{$110$} & \large{$111$}\\
\noalign{\smallskip}\hline\noalign{\smallskip}
\large{$ABC$} & \large{$\frac{1}{4}$} & \large{$0$} & \large{$0$} & \large{$\frac{1}{4}$} & \large{$0$} & \large{$\frac{1}{4}$} & \large{$\frac{1}{4}$} & \large{$0$}\\
\noalign{\smallskip}\hline\noalign{\smallskip}
\large{$ABc$} & \large{$\frac{1}{4}$} & \large{$0$} & \large{$0$} & \large{$\frac{1}{4}$} & \large{$0$} & \large{$\frac{1}{4}$} & \large{$\frac{1}{4}$} & \large{$0$}\\
\noalign{\smallskip}\hline\noalign{\smallskip}
\large{$AbC$} & \large{$\frac{1}{4}$} & \large{$0$} & \large{$0$} & \large{$\frac{1}{4}$} & \large{$0$} & \large{$\frac{1}{4}$} & \large{$\frac{1}{4}$} & \large{$0$}\\
\noalign{\smallskip}\hline\noalign{\smallskip}
\large{$Abc$} & \large{$0$} & \large{$\frac{1}{4}$} & \large{$\frac{1}{4}$} & \large{$0$} & \large{$\frac{1}{4}$} & \large{$0$} & \large{$0$} & \large{$\frac{1}{4}$}\\
\noalign{\smallskip}\hline\noalign{\smallskip}
\large{$aBC$} & \large{$\frac{1}{4}$} & \large{$0$} & \large{$0$} & \large{$\frac{1}{4}$} & \large{$0$} & \large{$\frac{1}{4}$} & \large{$\frac{1}{4}$} & \large{$0$}\\
\noalign{\smallskip}\hline\noalign{\smallskip}
\large{$aBc$} & \large{$0$} & \large{$\frac{1}{4}$} & \large{$\frac{1}{4}$} & \large{$0$} & \large{$\frac{1}{4}$} & \large{$0$} & \large{$0$} & \large{$\frac{1}{4}$}\\
\noalign{\smallskip}\hline\noalign{\smallskip}
\large{$abC$} & \large{$0$} & \large{$\frac{1}{4}$} & \large{$\frac{1}{4}$} & \large{$0$} & \large{$\frac{1}{4}$} & \large{$0$} & \large{$0$} & \large{$\frac{1}{4}$}\\
\noalign{\smallskip}\hline\noalign{\smallskip}
\large{$abc$} & \large{$0$} & \large{$\frac{1}{4}$} & \large{$\frac{1}{4}$} & \large{$0$} & \large{$\frac{1}{4}$} & \large{$0$} & \large{$0$} & \large{$\frac{1}{4}$}\\
\noalign{\smallskip}\hline
\end{tabular}
\end{table*}
which also has maximally uniform marginals, and therefore only has contextuality in dimension $n=2$.

The study of the algebraic behavior of $n$-contextuality will be explored in a future work.

\subsection{An example of topological influence}
\label{IIIC}

Armed with $n$-contextuality, we can explore whether the topology of the scenario interferes, even indirectly, with the amount of contextuality. Here, we will construct an example that allows us to observe this influence. Let’s start with an empirical model in a barycentric triangle, which ends up being a sub-model of the tetrahedron presented in example \ref{Tetra}. In it, the face $abc$ is forgotten and the noncontextual fraction is $NCF=\frac{1}{2}$, whereas the complete tetrahedron has $NCF=\frac{1}{4}$. We have two options: 
\begin{itemize}
    \item either the face adds $\frac{1}{4}$ of contextuality in the model,
    \item or it happens because of the change in topology.
\end{itemize} 

To test the possibilities, it is possible to study the local sections on the face, since the marginals are maximally uniform. Realizing that local sections in the triangle satisfy the system of linear equations
\begin{equation}
    \sum_{x}p(x,y,z|a,b,c)=p(y,z|b,c)
\end{equation}
\begin{equation}
    \sum_{y}p(x,y,z|a,b,c)=p(x,z|a,c)
\end{equation}
\begin{equation}
    \sum_{z}p(x,y,z|a,b,c)=p(x,y|a,b)
\end{equation}
\begin{equation}
    \sum_{x,y,z}p(x,y,z|a,b,c)=1
\end{equation}
that are valid due no-disturbance and the probability condition, we can see that the triangle $abc$ has a generic section on the face given by the vector
\begin{equation}
\left(\frac{1}{4}-\eta,\eta,\eta,\frac{1}{4}-\eta,\eta,\frac{1}{4}-\eta,\frac{1}{4}-\eta,\eta\right) 
\end{equation}
with $\eta\in[0,\frac{1}{4}]$ and the maximally uniform occurring in $\eta=\frac{1}{8}$. However, adding a face to the model one gets the generic case of Table \ref{tab: Tetra Generic}
\begin{table*}\centering
\caption{Table of outcome probabilities of each context for the tetrahedron model with generic context $abc$.}
\label{tab: Tetra Generic}
\begin{tabular}{|c|c|c|c|c|c|c|c|c|}
\hline\noalign{\smallskip}
& \large{$000$} & \large{$001$} & \large{$010$} & \large{$011$} & \large{$100$} & \large{$101$} & \large{$110$} & \large{$111$}\\
\noalign{\smallskip}\hline\noalign{\smallskip}
\large{$abc$} & \large{$\frac{1}{4}-\eta$} & \large{$\eta$} & \large{$\eta$} & \large{$\frac{1}{4}-\eta$} & \large{$\eta$} & \large{$\frac{1}{4}-\eta$} & \large{$\frac{1}{4}-\eta$} & \large{$\eta$}\\
\noalign{\smallskip}\hline\noalign{\smallskip}
\large{$abd$} & \large{$\frac{1}{4}$} & \large{$0$} & \large{$0$} & \large{$\frac{1}{4}$} & \large{$0$} & \large{$\frac{1}{4}$} & \large{$\frac{1}{4}$} & \large{$0$}\\
\noalign{\smallskip}\hline\noalign{\smallskip}
\large{$acd$} & \large{$0$} & \large{$\frac{1}{4}$} & \large{$\frac{1}{4}$} & \large{$0$} & \large{$\frac{1}{4}$} & \large{$0$} & \large{$0$} & \large{$\frac{1}{4}$}\\
\noalign{\smallskip}\hline\noalign{\smallskip}
\large{$bcd$} & \large{$0$} & \large{$\frac{1}{4}$} & \large{$\frac{1}{4}$} & \large{$0$} & \large{$\frac{1}{4}$} & \large{$0$} & \large{$0$} & \large{$\frac{1}{4}$}\\
\noalign{\smallskip}\hline
\end{tabular}
\end{table*}
where contextuality remains the same, with $NCF=\frac{1}{4}$, corroborating the option of topological dependency. Therefore, the simple addition of a face, regardless of which one, generates more contextuality, in line with the topological influence option. A formalization of this dependency must be constructed so that something more can be said about this result, which it’s beyond the scope of this paper.

\section{Exploring 1-contextual models}
\label{IV}

We will now investigate contextuality, restricting ourselves to contexts with two measurements in each maximal context. Such measurement scenarios can show only $1$-contextuality and can be represented as a graph with measurements as vertices and compatible pairs of measurements defining an edge. Our objective is to study the respective measure bundles and construct a way to identify contextuality analogous to differential geometry. We will introduce the concept of contextual holonomy, whose non-triviality under certain conditions is a sufficient condition for contextual behavior.

\subsection{Discrete differential geometry}
\label{IVA}

Even if the analogy between topological and measure bundles cannot be realized, one can use some ideas to study the latter. As a simplicial complex, any scenario accepts its study in the formalism of discrete differential geometry \cite{Crane:2013:DGP,grady2010discrete}. In this formalism, an $n$-simplex is treated as a ``quanta of space,'' and the topology follows from the topology of the simplicial complex in its geometric representation. Here we will present some useful concepts from this formalism\footnote{The presentation of discrete differential geometry here is far from formal and aims to instigate useful concepts.}.

We can define the boundary operator $\partial$ on an $n$-simplex in the usual way, through the orientation given to each $n$-simplex. Its boundary obtained by $\partial$ will be given by the sum of the $(n-1)$-simplices that form its boundary, that is, the $(n-1)$-simplices that can be obtained by marginalization. For example, in the case of $2$-simplices, the boundary operator allows us to consider the edge of a bi-dimensional simply connected region $S$, given by the sum of $1$-simplices
\begin{equation}
    \partial S=\sum_{i}a_{i}a_{i+1},
\end{equation}
where $a_{i}a_{i+1}$ are the $1$-simplices that make up an edge of $S$.

Another feature is the formalism of discrete differential forms, which can be informally conceived as functions to measure the “size” of the simplices. Let the set of $n$-simplices of a simplicial complex $\textbf{X}$ be denoted by $\mathcal{C}_{n}$. To define a “size” we must first fix the algebraic structure $A$ where it is valued, usually the real numbers or a Lie algebra. A discrete differential $n$-form is defined as a dual of $n$-simplices. We will denote by $\mathcal{C}^{n}$ the set of $n$-forms. If $\bra{\omega}\in\mathcal{C}^{n}$, then we have 
\begin{align}
\begin{split}
  \bra{\omega} \colon \mathcal{C}_{n} &\to A\\
  \ket{S} &\mapsto \left<\omega |U\right>.
 \end{split}
\end{align}

We can also define the coboundary operator that acts on the $n$-forms. To define how the coboundary operator acts, we need the generalized Stokes' theorem. For the case of $A$ being Abelian, we have that the coboundary operator $d$ is defined such that it satisfies the generalized Stokes' theorem
\begin{equation}
\bra{d\omega}\ket{S}=\bra{\omega}\ket{\partial S},
\end{equation}
while for the case of $A$ being non-Abelian, we define the coboundary operator $D$ through the non-Abelian version of the generalized Stokes' theorem
\begin{equation}
\bra{D\omega}\ket{S}=\bra{\omega}\ket{\partial S},
\end{equation} 
where we lose linearity.

With the $n$-forms, we can explore the pieces of the simplicial complex and extract information about its shape. An important result that holds in its discrete form is the Ambrose-Singer theorem, which relates the curvature of the simplicial complex and the parallel transport of the differential forms. In the discrete formalism, one can define discrete curvature by calculating the angle generated by the transport of the vector normal to the surface, which depends on the length of the path given by a differential form. This describes the geometry of the simplicial complex as immersed in a Euclidean space. Given that we use the $1$-form $\omega$ to find the ``size'' of the $1$-simplices of a closed path, by Stokes' theorem, we can represent the curvature of a surface $S$ as a $2$-form $d\omega:\mathcal{C}_{2}\to\mathbb{R}$ valued in real numbers measuring the phase generated by parallel transport. This phase defines an algebraic structure called holonomy, which encodes the phases of all closed paths from a fixed point, or even the phases of all parallel paths between two fixed points.

Measure bundles do not have a differential structure in their construction, and even the topology can be neglected, since contextuality may not be adequately expressed for the bundle topology to capture it. To see this, first remember example \ref{ex9}, which shows that the standard topology of the base is not sufficient to determine contextual behavior, although there is indeed some influence, as detected in section \ref{IIIC}. Furthermore, by observing bundles like the one in example \ref{Bell Fig}, we see that the topology of outcomes is no longer sufficient to identify contextual behavior.

Nevertheless, the approach of contextuality presented here has many similarities with discrete differential geometry. It is defined by a fixed $R$-state $P$, and since the no-disturbance condition implies that the $R$-valued measures are well defined, one can define measure-valued $1$-form on the simplicial complex as the dual of the simplices. Here we identify the elements of the $\sigma$-algebra of the outcome space $\textbf{O}^{U}$ of the context $U$ as the $1$-forms

\begin{equation}
    \bra{\omega}\ket{U}_{P}=P[U](\omega)=\mu^{U}(\omega)\in R,
\end{equation}
for all $\omega\in\Sigma^{U}$\footnote{In this sense, we are working with measurement contextuality. The state is kept fixed, and the contextual behavior follows from the relation between events and measurements, here described as contexts. Event and measurement define the notion of effect in operational theories, thus measurement contextuality is the contextuality of effects when the state is fixed.}. In the following, the notation of this section will be used when necessary.

\subsection{Connection}
\label{IVB}

Since the contextuality of $1$-contextual models appears only through the edges in the simplicial complex, we can use the joint distributions on the edges to define a connection in the measure bundle. This connection would encode, for each vertex and edge at this vertex, how the distribution modifies when carried from the edge along the vertex.

For each edge, we can define two Markov kernels, one for each direction on the edge, through the local section on it and the marginalizations of the complement measurement. They can be understood as the result of a path operator.

\begin{definition}
A path operator is a function $T::(U,m)\mapsto T_{nm}$ with $U\in\mathcal{C}_{1}$, $m,n\in\mathcal{C}_{0}$ and $m,n\in U$, that maps a $1$-simplex and one of its vertices to a Markov kernel $T_{nm}::\mu^{m}\mapsto\mu^{n}$.
\end{definition}

The joint distribution does not choose a direction in the $1$-simplex. Thus, to define a path operator, an initial vertex must be chosen. It is analogous to the discrete version of choosing a point and a direction in a variety and getting the rule of changing some property in that direction. Note, however, that $T_{nm}T_{mn}$ is not necessarily the identity since we are considering kernels, and they may not form a group. This fact is what differentiates a measure bundle from a topological bundle. In the latter, the path operator defines a deterministic map of the elements of the fiber, while in the former, we have a map of the distributions defined over such elements.

\begin{definition}
A connection is defined as $\mathbb{T}::s_{m}\mapsto T\left(\left\{m,*\right\},s_{m}(m)\right)$ where
\begin{widetext}
\begin{align}
  T ::\left(\left\{m,n\right\},s_{m}(m)\right)=\left(\left\{m,n\right\},\left(m,\textbf{O}^{m},\mu^{m}\right)\right) &\mapsto \left(n,\textbf{O}^{n},T_{nm}\mu^{m}\right)=s_{mn}(n),
\end{align}
\end{widetext}
with $T_{nm}$ being the Markov kernel. $T$ is called a path transformation.
\end{definition}

The path transformation is equivalent to the path operator but induced to work at the level of sections. The connection defined here is the discrete analog of the usual connection in differential geometry. To each section of a vertex, it gives all the path transformations with it as the initial point.

We will call a contextual connection the case where the connection is constructed with the information of the joint measure of the $1$-simplex. Given a context $\left\{m,n\right\}$ with two minimal contexts $m$ and $n$, the allowed path transformations have the form 
\begin{align}
  T::\left(\left\{m,n\right\},s_{m}(m)\right) &\mapsto s_{mn}(n),
\end{align}
with $s_{m}(m)=\left(m,\textbf{O}^{m},\mu^{m}\right)$ and $s_{mn}(n)=s_{mn}(mn)|_{n}=\left(n,\textbf{O}^{n},\mu^{mn}|_{n}\right)=\left(n,\textbf{O}^{n},T_{nm}\mu^{m}\right)$.

The connection defines a transformation in both the fibers and the base, but in such a way that we can separate it into the horizontal and vertical parts. The change in the fibers needs to be extracted from the Markov kernels, but the composition is not a globally well-defined product in the set of Markov kernels\footnote{They will form a category, known as the Markov category \cite{PANANGADEN1999171}.}. This will be a problem when seeking an algebraic structure that allows identifying the contextual behavior, as will become clear below.

To exemplify the contextual connection, let's generically construct the Markov kernels for dichotomic fibers with outcomes ${0,1}$ in the following example.

\begin{example}
Suppose a $1$-simplex with vertices $a$ and $b$, both with dichotomic fibers with outcomes ${0,1}$. As a context, there are the probabilities that each mutual result will occur, denoted by $p(i,j|a,b)$, with $i,j\in{0,1}$. We can construct the matrix of such probabilities $M^{(a,b)}_{ij}$, which has no notion of transport from one vertex to another. As we assume the action to the left of the Markov kernel of the path transformation, the origin vertex indexes the lines of the stochastic matrix. In addition, columns are normalized. In this way, the kernel will have forms dependent on the direction chosen in the $1$-simplex. We get by taking $a$ as the origin of the transport
\begin{align}
\begin{split}
    T_{ba}^{ij}&=\frac{M^{(a,b)}_{ij}}{\sum_{i}M^{(a,b)}_{ij}}\\
    &=\frac{p(i,j|a,b)}{\sum_{i}p(i,j|a,b)}\\
    &=\frac{p(i,j|a,b)}{p(j|b)}\\
    &=p(i|j;a,b)
\end{split}
\end{align}
and taking $b$ as the origin of the transport
\begin{align}
\begin{split}
T_{ab}^{ji}&=\frac{\left(M^{(a,b)}_{ij}\right)^{T}}{\sum_{i}M^{(a,b)}_{ij}}\\
&=\frac{p(i,j|a,b)}{\sum_{j}p(i,j|a,b)}\\
&=\frac{p(i,j|a,b)}{p(i|a)}\\
&=p(j|i;b,a).
\end{split}
\end{align}
\end{example}

\subsection{Parallel transport in 1-contextual models}
\label{IVC}

In possession of the connection in a model, we can look for ways to identify its $1$-contextuality. Inspired by the analogy between discrete differential geometry and how the connection acts on the vertices of the simplicial complex, we will explore the role of holonomy. We will start by defining parallel transport and the decomposition of such transport that we will use.

We will first study the case of a finite fiber $R$ with $n$ elements. Here, as described in Ref. \cite{amaral2017geometrical}, we can describe a probabilistic measure as a vector living in $\mathbb{R}^{d}$. In this space, we fix an orthonormal basis where each of its elements will represent an outcome, such that the conditions of probabilistic distributions (non-negative entries that sum to one) hold for the projections on this basis. We will keep the same fibers for the atomic measurements, thus the Markov kernels are stochastic matrices that represent the change in these vectors (a linear operator such that the probabilistic properties are preserved). With this description, we can identify the measure bundle as a vector bundle \cite{Lang1995}, with $\mathbb{R}^{d}$ as fibers. The $\sigma$-algebra of each fiber will be the discrete one, and the vectors accessible by a section are restricted to probabilistic ones.

First, let's study what kernels the no-disturbance condition allows in the vertices. The passage from one context to another can be described by a transition function that can be seen as a Markov kernel $T_{n_{f}m|_{m},n_{i}m|_{m}}$ such that
\begin{equation}
T_{n_{f}m|_{m},n_{i}m|_{m}}\left(\mu^{n_{i}m}|_{m}\right)=\left(\mu^{n_{f}m}|_{m}\right).
\end{equation}
Since $T_{n_{f}m|_{m},n_{i}m|_{m}}$ is a linear function that takes a vector representation of a measure to another, by the no-disturbance condition, it needs not just to preserve the distribution but also the basis of the vector space. Therefore, the Markov kernels that may have the general form $T_{mm}=ODO^{T}$, with $O$ being an orthogonal matrix and $D$ a diagonal matrix with entries of $0$s or $1$s, the definition of an idempotent matrix. Since we are going to assume non-singularity, and the only non-singular idempotent matrix is the identity, we can take the identity for all transitions and ignore them from now on.  

For the general case of a path transformation, the stochastic matrix can be singular value decomposed into an orthogonal part describing the transformations between frames and a diagonalizable matrix that encodes the information forgotten along the path. We can extract an element of $O(d)$ for each stochastic matrix, 
\begin{equation}
T_{mn}=U\Sigma V^{T}=\left(U\Sigma U^{T}\right)\left(UV^{T}\right),
\end{equation}
where the matrix $U$ is the eigenvector matrix with the orthonormal basis of $T_{mn}T^{T}_{mn}$, the matrix $V$ is the eigenvector matrix with the orthonormal basis of $T^{T}_{mn}T_{mn}$, and $\Sigma$ is the diagonal matrix with the singular values as the diagonal entries. We identify $UV^{T}$, which effectively changes the basis of the vector space and defines a monoid homomorphism
\begin{align}
\begin{split}
    \Phi:\mathbb{M}(d) &\to O(d)\\
    M &\mapsto UV^{T}
\end{split}
\end{align}
of the stochastic matrices to the orthogonal group, both defined in $\mathbb{R}^{d}$. As one can see, there will be problems when handling singular stochastic matrices, as they erase part of the original information.

Parallel transport on the measure bundle is the composition of the stochastic matrices from an initial vertex through a path of $1$-simplices to a final one, by two different paths. In the end, a comparison of the information transported is made. The information here is the basis of the vector space, or equivalently the frame in the induced frame bundle.

\begin{theorem}[Parallel transport and contextuality: finite case]
If a measure bundle satisfying the no-disturbance condition and with non-singular stochastic matrices is noncontextual, then for any two paths $I^{b}_{a}$ and $I'^{b}_{a}$ with the same initial and final vertices, it holds that
\begin{equation}
\prod_{I_{a}^{b}}\Phi\left(T_{a_{i}a_{i+1}}\right)=\prod_{I'^{b}_{a}}\Phi\left(T_{a_{i}a_{i+1}}\right).
\end{equation}
\label{parallelfinite}
\end{theorem}

\begin{proof}
The no-disturbance condition and noncontextuality imply the extension to a global section of the measures. Its measure bundle admits a monoid such that by the homomorphism $\Phi$ it can be reduced to a group $G=O(d)$. As all the outcome fibers are the same, we can write a local section on a measurement $a$ as being transported through a path $I_{a}^{b}$ as
\begin{equation}
\left(b,\textbf{O},\prod_{I_{a}^{b}}T_{a_{i}a_{i}}T_{a_{i}a_{i+1}}\mu^{a}\right),
\end{equation}
and through a path $I'^{b}_{a}$ as
\begin{equation}
\left(b,\textbf{O},\prod_{I'^{b}_{a}}T_{a_{i}a_{i}}T_{a_{i}a_{i+1}}\mu^{a}\right).
\end{equation}
By the no-disturbance condition, the phase of the parallel transport will be given by a change in the outcome space. By noncontextuality and Theorem \ref{FAB} we have a global section 
\begin{equation}
s_{X}(U)=\left(U,\textbf{O}^{X}|_{U},\mu^{X}|_{U}\right),
\end{equation}
such that they must be equal to the restriction of it to the vertex $b$, implying the equality
\begin{equation}
\prod_{I_{a}^{b}}\Phi\left(T_{a_{i}a_{i}}T_{a_{i}a_{i+1}}\right)=\prod_{I'^{b}_{a}}\Phi\left(T_{a_{i}a_{i}}T_{a_{i}a_{i+1}}\right).
\end{equation}
\end{proof}

For a non-finite outcome space, the representation in a vector space is more subtle, but we can do a similar construction. Suppose that the Markov kernels $T_{mn}=\omega(x,y)$ are square integrable on all the outcome fiber $O$. We can define
\begin{equation}
\omega_{1}(x,y)=\int_{R}\omega(x,z)\omega(y,z)dz
\end{equation}
and
\begin{equation}
\omega_{2}(x,y)=\int_{R}\omega(z,x)\omega(z,y)dz,
\end{equation}
which are symmetric and positive definite. Using the Mercer's theorem \cite{devito2011extension}, they can be rewritten as 
\begin{equation}
\omega_{1}(x,y)=\sum_{j}\alpha_{j}u_{j}(x)u_{j}(y)
\end{equation}
and
\begin{equation}
\omega_{2}(x,y)=\sum_{j}\alpha_{j}v_{j}(x)v_{j}(y).
\end{equation}
One can show that
\begin{equation}
\omega(x,y)=\sum_{j}\sqrt{\alpha_{j}}u_{j}(x)v_{j}(y).
\end{equation}
Therefore, the map $\Phi:\mathbb{M}\to\mathbb{M}$ can be defined as $\Phi\left(\omega(x,y)\right)=\sum_{j}u_{j}(x)v_{j}(y)$. With this, we can prove the next theorem similarly to the finite case.

\begin{theorem}[Parallel transport and contextuality: non-finite case]
If a measure bundle satisfying the no-disturbance condition, with square integrable and non-singular Markov kernels is noncontextual, then for any two paths $I_{a}^{b}$ and $I'^{b}_{a}$ with the same initial and final vertices, it holds that
\begin{equation}
\prod_{I_{a}^{b}}\Phi\left(T_{a_{i}a_{i+1}}\right)=\prod_{I'^{b}_{a}}\Phi\left(T_{a_{i}a_{i+1}}\right).
\end{equation}
\label{Parallel}
\end{theorem}

\subsection{Contextual holonomy}
\label{IVD}

Non-trivial parallel transport defines the holonomy group with respect to a fixed point. The holonomy group is non-trivial if and only if there is at least one non-trivial phase of the parallel transport through a loop. Given a loop $\gamma$ with a fixed point $x$ in the simplicial complex, we define the operator $P::\gamma\mapsto\prod_{\gamma}\Phi\left(T_{a_{i}a_{i+1}}\right)$.

\begin{definition}
The group $Hol_{x}(T)=\left\{P(\gamma)\in GL\left(d\right)|\gamma\text{ is a loop based at }x\right\}$ is called the holonomy group at the vertex $x\in\mathcal{C}^{-}$ of the connection $T$.
\end{definition}

The holonomy group will codify the phase of the non-trivial parallel transport as a subgroup generated in a fixed loop. Notice that $Hol_{x}(T)\in O(d)$, and that there is independence of the vertex $x$ in the loop, since a similarity transformation is enough to change the point without changing the group. As a result, we can then get the following.

\begin{theorem}
If an empirical model is $1$-noncontextual and does not present a singular path, then it has a trivial holonomy group for the contextual connection.
\label{Holonomy}
\end{theorem}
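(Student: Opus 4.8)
The plan is to read every loop as a path whose two endpoints coincide and to deduce triviality of the holonomy straight from the path-independence already proved in Theorem \ref{Parallel}. First I would check that the hypotheses line up. By the Fine--Abramsky--Brandenburger theorem (Theorem \ref{FAB}), $1$-non-contextuality supplies a global $R$-measure whose restrictions recover the context measures; since marginals of a single global measure automatically agree on overlaps, the model is non-disturbing, which is the standing assumption of Theorem \ref{Parallel}. The clause ``does not present a singular path'' is exactly the non-singular-kernel hypothesis of that theorem, and it plays a second, equally important role: it guarantees that the singular value decomposition $T_{mn}=U\Sigma V^{T}$ has an unambiguous orthogonal factor, so that $\Phi(T_{mn})=UV^{T}\in O(d)$ is well defined and each $P_{\gamma}$ is a genuine, hence invertible, element of $O(d)\subset GL(\mathcal{D}(\textbf{O}))$.

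Next I would run a doubling argument that sidesteps any delicacy about constant paths. Fix a loop $\gamma$ based at $x$, so that $P_{\gamma}=\prod_{\gamma}\Phi(T_{a_{i}a_{i+1}})$, and consider the loop $\gamma\gamma$ obtained by traversing $\gamma$ twice. Both $\gamma$ and $\gamma\gamma$ are paths from $x$ to $x$ built from the same non-singular kernels, so Theorem \ref{Parallel} forces $P_{\gamma}=P_{\gamma\gamma}$. But the edge list of $\gamma\gamma$ is simply the edge list of $\gamma$ written twice, whence $P_{\gamma\gamma}=P_{\gamma}P_{\gamma}$ as an ordered product of the already-extracted orthogonal factors; no multiplicativity of $\Phi$ is needed for this identity. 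Combining the two equalities gives $P_{\gamma}=P_{\gamma}^{2}$, and cancelling the invertible $P_{\gamma}$ yields $P_{\gamma}=I$. Since $\gamma$ was arbitrary, $Hol_{x}(T)=\{I\}$; the base-point independence noted before the statement (conjugation by the transport along a connecting path) then propagates triviality to every vertex. I would also remark that, under non-disturbance, the vertex-transition factors $\Phi(T_{a_{i}a_{i}})$ act trivially, since the marginal at each $a_{i}$ is independent of the context containing it, so the edge-only product $P_{\gamma}$ used in the definition of the holonomy coincides with the product $\prod\Phi(T_{a_{i}a_{i}}T_{a_{i}a_{i+1}})$ appearing in Theorem \ref{Parallel}.

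The main obstacle I expect is bookkeeping rather than depth: I must make fully explicit that $1$-non-contextuality genuinely implies non-disturbance and that the resulting vertex transitions are the identity, so that the holonomy product and the transport product of Theorem \ref{Parallel} are literally the same object. A second point to handle with care is that the orthogonal factor $\Phi$ extracted by the singular value decomposition is \emph{not} multiplicative, so one cannot simplify $P_{\gamma}$ by pushing $\Phi$ through products of kernels; the argument must therefore be routed entirely through Theorem \ref{Parallel}, whose conclusion is a statement about the accumulated orthogonal factors along whole paths. Finally, the non-singularity hypothesis is doing real work at two places at once --- it makes $\Phi(T_{a_{i}a_{i+1}})$ unambiguously orthogonal and it makes $P_{\gamma}$ invertible --- and it is precisely the invertibility that licenses the cancellation $P_{\gamma}=P_{\gamma}^{2}\Rightarrow P_{\gamma}=I$, so I would flag that the conclusion can fail once singular paths erase information.
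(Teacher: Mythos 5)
Your proposal is correct and follows essentially the same route as the paper, which states this theorem as an immediate consequence of the path-independence result (Theorem \ref{Parallel}): a loop is a path whose endpoints coincide, so non-contextuality plus non-singularity forces the accumulated orthogonal factors around any loop to be trivial. Your doubling argument $P_{\gamma}=P_{\gamma\gamma}=P_{\gamma}^{2}\Rightarrow P_{\gamma}=I$ is a clean way to make that step precise without invoking the empty path, and your bookkeeping points (non-contextuality $\Rightarrow$ non-disturbance via Theorem \ref{FAB}, triviality of the vertex transitions, non-multiplicativity of $\Phi$) are exactly the details the paper leaves implicit.
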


This follows from theorems \ref{parallelfinite} and \ref{Parallel} for closed loops, where the transition function given by the condition of no-disturbance and non-singularity imposes that the group must be trivial. Again, models with singular paths must be omitted, as they can erase contextual behavior in the holonomy by losing information. The reciprocal of this theorem is false: it is possible to get contextual models with a trivial holonomy group. Both cases will appear in the following examples.

We can express theorem \ref{Holonomy} through the logical statement $NC\wedge NS\implies TH$, where NC is noncontextuality, NS is non-singularity, and TH is trivial holonomy. The negation, $H\implies C\lor S$, says that holonomy as defined shows either contextuality or singularity in the model, which erases contextuality. The holonomy used here via singular value decomposition looks too coarse once it fixes the elements of the outgoing fiber as the base of the vector space. For this reason, it does not fully capture contextuality in non-singular models, besides the singularity having a strange behavior regarding contextuality\footnote{This is a problem when using group structures to study semiring-valued measures. Something is forgotten, and a different framework is necessary, see Ref. \cite{montanhano2021contextuality}.}. \footnote{One can speculate if higher holonomy groups can do an equivalent condition for higher contextual behaviors. The difficulties in dealing with higher holonomy groups and the necessity to handle probabilistic distributions seem to indicate that a different encoding of the measure bundle is necessary.}

\subsection{Examples}
\label{IVE}

\begin{example}[Trivial]
The example of Table \ref{Trivial bundle} has kernels
\begin{equation}T_{ij}=\begin{bmatrix}
1 & 0\\
0 & 1
\end{bmatrix},\end{equation}
implying $\Phi\left(T_{ij}\right)=id$, therefore
\begin{equation}\prod_{I_{x}}\Phi\left(T_{ij}\right)=id.\end{equation}
By no-disturbance, $\Phi\left(t_{xx}\right)=id$. The holonomy group of this example is trivial, in agreement with the noncontextual behaviour of this model.
\end{example}

\begin{example}[Liar cycle]
For the example of Table \ref{Liar cycle}, following the construction of the stochastic matrices, we get
\begin{equation}
T_{ab}=T_{bc}=T_{cd}=T_{de}=\begin{bmatrix}
1 & 0\\
0 & 1
\end{bmatrix},\end{equation}
therefore $\Phi\left(T_{ij}\right)=id$ for these contexts, but for the last one
\begin{equation}T_{ea}=\begin{bmatrix}
0 & 1\\
1 & 0
\end{bmatrix}\end{equation}
and $\Phi\left(T_{ea}\right)=T_{ea}$. As $\Phi\left(t_{xx}\right)=id$ for non-disturbing models, we get
\begin{equation}
\Phi\left(\prod_{I_{x}}T_{ij}\right)=\begin{bmatrix}
0 & 1\\
1 & 0
\end{bmatrix}\neq id.\end{equation}
The holonomy group is non-trivial, $Hol_{x}\left(T\right)=\mathbb{Z}_{2}$, because $\Phi\left(\prod_{2nI_{x}}T_{ij}\right)=id$ and $\Phi\left(\prod_{(2n+1)I_{x}}T_{ij}\right)=\begin{bmatrix}
0 & 1\\
1 & 0
\end{bmatrix}$, with $n\in\mathbb{Z}$, in agreement with the contextual behaviour of the model.
\end{example}

\begin{example}[KCBS]
The example of Table \ref{KCBS} has stochastic matrices
\begin{equation}
T_{ij}=\begin{bmatrix}
0 & 1\\
1 & 0
\end{bmatrix}
\end{equation}
for all contexts, resulting in $\Phi\left(\prod_{I_{x}}T_{ij}\right)=\begin{bmatrix}
0 & 1\\
1 & 0
\end{bmatrix}\neq id$, in agreement with its contextual behaviour. For the same reason of the Liar cycle example, the holonomy group is $Hol_{p}\left(T\right)=\mathbb{Z}_{2}$.
\end{example}

\begin{example}[Hardy]
Previously examples were simple, in the sense that $\Phi$ changed nothing in the stochastic matrices. The example of Table \ref{Hardy} cannot be understood just as a topological bundle. The stochastic matrices are
\begin{align}
\begin{split}
T_{ab}&=\begin{bmatrix}
\frac{2}{8} & 1\\
\frac{6}{8} & 0
\end{bmatrix}, T_{bc}=\begin{bmatrix}
0 & 1\\
1 & 0
\end{bmatrix}, T_{cd}=\begin{bmatrix}
\frac{1}{2} & 1\\
\frac{1}{2} & 0
\end{bmatrix}, \\
&T_{de}=\begin{bmatrix}
0 & 1\\
1 & 0
\end{bmatrix}, T_{ea}=\begin{bmatrix}
\frac{2}{3} & 1\\
\frac{1}{3} & 0
\end{bmatrix},
\end{split}
\end{align}
and doing the singular value decomposition and applying $\Phi$, we get 
\begin{equation}
\Phi\left(T_{ea}T_{de}T_{cd}T_{bc}T_{ab}\right)= \begin{bmatrix}
\frac{4}{5} & \frac{3}{5}\\
\frac{3}{5} & -\frac{4}{5}
\end{bmatrix},
\end{equation}
different of the identity implied by no-disturbance. By the same reason of the Liar cycle, the holonomy group is $Hol_{p}\left(T\right)=\mathbb{Z}_{2}$, a result of its contextual behaviour.
\end{example}

\begin{example}[Bell]
The stochastic matrices of Table \ref{Bell} are given by
\begin{align}
\begin{split}
T_{ab}&=\begin{bmatrix}
1 & 0\\
0 & 1
\end{bmatrix}, T_{bc}=T_{cd}=\begin{bmatrix}
\frac{3}{4} & \frac{1}{4}\\
\frac{1}{4} & \frac{3}{4}
\end{bmatrix}, \\
&T_{de}=\begin{bmatrix}
\frac{1}{4} & \frac{3}{4}\\
\frac{3}{4} & \frac{1}{4}
\end{bmatrix}, T_{ea}=\begin{bmatrix}
0 & 1\\
1 & 0
\end{bmatrix},
\end{split}
\end{align}
and $\Phi\left(T_{ea}T_{de}T_{cd}T_{bc}T_{ab}\right)=\begin{bmatrix}
0 & 1\\
1 & 0
\end{bmatrix}$. Therefore,  $Hol_{p}\left(T\right)=\mathbb{Z}_{2}$.
\end{example}

\begin{example}[Maximally random]
All the previous examples have non-singular stochastic matrices. This one, Table \ref{Homogeneous}, is the extreme case of singular paths.
\begin{figure}
  \centering
\scalebox{0.3}{
\begin{tikzpicture}

\draw [ultra thick] (-9,12) -- (-3,10) node[right] {\Huge{$\ a$}};
\draw [ultra thick] (-3,10) -- (0,13) node[right] {\Huge{$\ b$}}; 
\draw [ultra thick] (0,13) -- (-6,15)
node[left] {\Huge{$c\ $}}; 
\draw [ultra thick] (-6,15) -- (-9,12) node[left] {\Huge{$d\ $}};

\filldraw [black] (-9,12) circle (4pt);
\filldraw [black] (-3,10) circle (4pt);
\filldraw [black] (0,13) circle (4pt);
\filldraw [black] (-6,15) circle (4pt);

\draw[loosely dotted, ultra thick] (-9,12) -- (-9,21);
\draw[loosely dotted, ultra thick] (-3,10) -- (-3,19);
\draw[loosely dotted, ultra thick] (0,13) -- (0,22);
\draw[loosely dotted, ultra thick] (-6,15) -- (-6,24);

\filldraw [black] (-9,21) circle (4pt);
\filldraw [black] (-3,19) circle (4pt);
\filldraw [black] (0,22) circle (4pt);
\filldraw [black] (-6,24) circle (4pt);

\filldraw [black] (-9,19) circle (4pt);
\filldraw [black] (-3,17) circle (4pt);
\filldraw [black] (0,20) circle (4pt);
\filldraw [black] (-6,22) circle (4pt);

\draw [ultra thick] (-9,19) -- (-3,19);
\draw [ultra thick] (-3,19) -- (0,20) node[right] {\Huge{$\ 1$}}; 
\draw [ultra thick] (0,20) -- (-6,22); 
\draw [ultra thick] (-6,22) -- (-9,21);

\draw [ultra thick] (-9,21) -- (-3,17);
\draw [ultra thick] (-3,17) -- (0,22) node[right] {\Huge{$\ 0$}}; 
\draw [ultra thick] (0,22) -- (-6,24); 
\draw [ultra thick] (-6,24) -- (-9,19);

\draw [ultra thick] (-9,21) -- (-3,19);
\draw [ultra thick] (-9,19) -- (-3,17);

\draw [ultra thick] (-6,22) -- (-9,19);
\draw [ultra thick] (-6,24) -- (-9,21);

\draw [ultra thick] (0,22) -- (-6,22); 
\draw [ultra thick] (0,20) -- (-6,24); 

\draw [ultra thick] (-3,19) -- (0,22);
\draw [ultra thick] (-3,17) -- (0,20);

\end{tikzpicture}}
\caption{The visualization of the maximally random example through its bundle diagram. This model has all possible global sections of its scenario.}
    \label{Homogeneous Fig}
\end{figure}

\begin{table}
  \centering
\begin{tabular}{|c|c|c|c|c|}
\hline\noalign{\smallskip}
& \large{$00$} & \large{$01$} & \large{$10$} & \large{$11$} \\
\noalign{\smallskip}\hline\noalign{\smallskip}
\large{$ab$} & \large{$\frac{1}{4}$} & \large{$\frac{1}{4}$} & \large{$\frac{1}{4}$} & \large{$\frac{1}{4}$} \\
\noalign{\smallskip}\hline\noalign{\smallskip}
\large{$bc$} & \large{$\frac{1}{4}$} & \large{$\frac{1}{4}$} & \large{$\frac{1}{4}$} & \large{$\frac{1}{4}$} \\
\noalign{\smallskip}\hline\noalign{\smallskip}
\large{$cd$} & \large{$\frac{1}{4}$} & \large{$\frac{1}{4}$} & \large{$\frac{1}{4}$} & \large{$\frac{1}{4}$} \\
\noalign{\smallskip}\hline\noalign{\smallskip}
\large{$da$} & \large{$\frac{1}{4}$} & \large{$\frac{1}{4}$} & \large{$\frac{1}{4}$} & \large{$\frac{1}{4}$} \\
\noalign{\smallskip}\hline
\end{tabular}
    \caption{Table of outcome probabilities of each context of the maximally random example.}
    \label{Homogeneous}
\end{table}
Taking all the stochastic matrices as the same, 
\begin{equation}
T_{ij}=\begin{bmatrix}
\frac{1}{2} & \frac{1}{2}\\
\frac{1}{2} & \frac{1}{2}
\end{bmatrix},
\end{equation}
an idempotent matrix. We get $\Phi\left(T_{ij}\right)=id$, a noncontextual behaviour (which is confirmed by the contextual fraction), and a trivial holonomy group. 
\end{example}

\begin{example}[Modified Bell]
Bell example has a trivial path transformation, and in the modified example of Table \ref{Modified Bell} we change it to a "liar" path transformation. The point here is to show that a local change can force contextuality to vanish, getting a noncontextual non-disturbing model.
\begin{figure}
\centering
\scalebox{0.3}{
\begin{tikzpicture}

\draw [ultra thick] (-9,12) -- (-3,10) node[right] {\Huge{$\ a$}};
\draw [ultra thick] (-3,10) -- (0,13) node[right] {\Huge{$\ b$}}; 
\draw [ultra thick] (0,13) -- (-6,15)
node[left] {\Huge{$c\ $}}; 
\draw [ultra thick] (-6,15) -- (-9,12) node[left] {\Huge{$d\ $}};

\filldraw [black] (-9,12) circle (4pt);
\filldraw [black] (-3,10) circle (4pt);
\filldraw [black] (0,13) circle (4pt);
\filldraw [black] (-6,15) circle (4pt);

\draw[loosely dotted, ultra thick] (-9,12) -- (-9,21);
\draw[loosely dotted, ultra thick] (-3,10) -- (-3,19);
\draw[loosely dotted, ultra thick] (0,13) -- (0,22);
\draw[loosely dotted, ultra thick] (-6,15) -- (-6,24);

\filldraw [black] (-9,21) circle (4pt);
\filldraw [black] (-3,19) circle (4pt);
\filldraw [black] (0,22) circle (4pt);
\filldraw [black] (-6,24) circle (4pt);

\filldraw [black] (-9,19) circle (4pt);
\filldraw [black] (-3,17) circle (4pt);
\filldraw [black] (0,20) circle (4pt);
\filldraw [black] (-6,22) circle (4pt);

\draw [ultra thick] (-9,19) -- (-3,19);
\draw [ultra thick] (-3,19) -- (0,20) node[right] {\Huge{$\ 1$}}; 
\draw [ultra thick] (0,20) -- (-6,22); 
\draw [ultra thick] (-6,22) -- (-9,21);

\draw [ultra thick] (-9,21) -- (-3,17);
\draw [ultra thick] (-3,17) -- (0,22) node[right] {\Huge{$\ 0$}}; 
\draw [ultra thick] (0,22) -- (-6,24); 
\draw [ultra thick] (-6,24) -- (-9,19);

\draw [ultra thick] (-9,21) -- (-3,19);
\draw [ultra thick] (-9,19) -- (-3,17);

\draw [ultra thick] (-6,22) -- (-9,19);
\draw [ultra thick] (-6,24) -- (-9,21);

\draw [ultra thick] (0,22) -- (-6,22); 
\draw [ultra thick] (0,20) -- (-6,24); 

\end{tikzpicture}}
\caption{The visualization of the Modified Bell example through its bundle diagram. It's explicit the change of the local section of the context $ab$.}
    \label{Modified Bell Fig}
\end{figure}

\begin{table}
  \centering
\begin{tabular}{|c|c|c|c|c|}
\hline\noalign{\smallskip}
& \large{$00$} & \large{$01$} & \large{$10$} & \large{$11$} \\
\noalign{\smallskip}\hline\noalign{\smallskip}
\large{$ab$} & \large{$0$} & \large{$\frac{1}{2}$} & \large{$\frac{1}{2}$} & \large{$0$} \\
\noalign{\smallskip}\hline\noalign{\smallskip}
\large{$bc$} & \large{$\frac{3}{8}$} & \large{$\frac{1}{8}$} & \large{$\frac{1}{8}$} & \large{$\frac{3}{8}$} \\
\noalign{\smallskip}\hline\noalign{\smallskip}
\large{$cd$} & \large{$\frac{3}{8}$} & \large{$\frac{1}{8}$} & \large{$\frac{1}{8}$} & \large{$\frac{3}{8}$} \\
\noalign{\smallskip}\hline\noalign{\smallskip}
\large{$da$} & \large{$\frac{1}{8}$} & \large{$\frac{3}{8}$} & \large{$\frac{3}{8}$} & \large{$\frac{1}{8}$} \\
\noalign{\smallskip}\hline
\end{tabular}
    \caption{Table of outcome probabilities of each context of the Modified Bell example.}
    \label{Modified Bell}
\end{table}
We just change $T_{ab}$ of the Bell example to
\begin{equation}
T_{ab}=\begin{bmatrix}
0 & 1\\
1 & 0
\end{bmatrix},
\end{equation}
clearly changing the product to $T_{ea}T_{de}T_{cd}T_{bc}T_{ab}=id$, what is preserved by $\Phi$. The holonomy group is trivial, and the model is noncontextual, agreeing with the contextual fraction.
\end{example}

\begin{example}["Liar in maximally random" examples]
Another modification of a previous example. The example of Table \ref{Contextual Homogeneous} change a context of the homogeneous example, with the aim to show that even if the holonomy group is non-trivial, it is still possible for the model to be noncontextual because of singular paths.
\begin{figure}
\centering
\scalebox{0.3}{
\begin{tikzpicture}

\draw [ultra thick] (-9,12) -- (-3,10) node[right] {\Huge{$\ a$}};
\draw [ultra thick] (-3,10) -- (0,13) node[right] {\Huge{$\ b$}}; 
\draw [ultra thick] (0,13) -- (-6,15)
node[left] {\Huge{$c\ $}}; 
\draw [ultra thick] (-6,15) -- (-9,12) node[left] {\Huge{$d\ $}};

\filldraw [black] (-9,12) circle (4pt);
\filldraw [black] (-3,10) circle (4pt);
\filldraw [black] (0,13) circle (4pt);
\filldraw [black] (-6,15) circle (4pt);

\draw[loosely dotted, ultra thick] (-9,12) -- (-9,21);
\draw[loosely dotted, ultra thick] (-3,10) -- (-3,19);
\draw[loosely dotted, ultra thick] (0,13) -- (0,22);
\draw[loosely dotted, ultra thick] (-6,15) -- (-6,24);

\filldraw [black] (-9,21) circle (4pt);
\filldraw [black] (-3,19) circle (4pt);
\filldraw [black] (0,22) circle (4pt);
\filldraw [black] (-6,24) circle (4pt);

\filldraw [black] (-9,19) circle (4pt);
\filldraw [black] (-3,17) circle (4pt);
\filldraw [black] (0,20) circle (4pt);
\filldraw [black] (-6,22) circle (4pt);

\draw [ultra thick] (-9,19) -- (-3,19);
\draw [ultra thick] (-3,19) -- (0,20) node[right] {\Huge{$\ 1$}}; 
\draw [ultra thick] (0,20) -- (-6,22); 
\draw [ultra thick] (-6,22) -- (-9,21);

\draw [ultra thick] (-9,21) -- (-3,17);
\draw [ultra thick] (-3,17) -- (0,22) node[right] {\Huge{$\ 0$}}; 
\draw [ultra thick] (0,22) -- (-6,24); 
\draw [ultra thick] (-6,24) -- (-9,19);

\draw [ultra thick] (-9,21) -- (-3,19);
\draw [ultra thick] (-9,19) -- (-3,17);

\draw [ultra thick] (-6,22) -- (-9,19);
\draw [ultra thick] (-6,24) -- (-9,21);

\draw [ultra thick] (0,22) -- (-6,22); 
\draw [ultra thick] (0,20) -- (-6,24); 

\end{tikzpicture}}
\caption{The visualization of the "Liar in maximally random" example with three maximally random paths. Its bundle diagram is the same of the Modified Bell example, the difference is in the probabilistic weights.}
    \label{Contextual Homogeneous Fig}
\end{figure}

\begin{table}
  \centering
\begin{tabular}{|c|c|c|c|c|}
\hline\noalign{\smallskip}
& \large{$00$} & \large{$01$} & \large{$10$} & \large{$11$} \\
\noalign{\smallskip}\hline\noalign{\smallskip}
\large{$ab$} & \large{$0$} & \large{$\frac{1}{2}$} & \large{$\frac{1}{2}$} & \large{$0$} \\
\noalign{\smallskip}\hline\noalign{\smallskip}
\large{$bc$} & \large{$\frac{1}{4}$} & \large{$\frac{1}{4}$} & \large{$\frac{1}{4}$} & \large{$\frac{1}{4}$} \\
\noalign{\smallskip}\hline\noalign{\smallskip}
\large{$cd$} & \large{$\frac{1}{4}$} & \large{$\frac{1}{4}$} & \large{$\frac{1}{4}$} & \large{$\frac{1}{4}$} \\
\noalign{\smallskip}\hline\noalign{\smallskip}
\large{$da$} & \large{$\frac{1}{4}$} & \large{$\frac{1}{4}$} & \large{$\frac{1}{4}$} & \large{$\frac{1}{4}$} \\
\noalign{\smallskip}\hline
\end{tabular}
    \caption{Table of outcome probabilities of each context with three maximally random paths of the "Liar in maximally random" example.}
    \label{Contextual Homogeneous}
\end{table}
In the Maximally random example with three maximally random paths, we change $T_{ab}$ to
\begin{equation}
T_{ab}=\begin{bmatrix}
0 & 1\\
1 & 0
\end{bmatrix},
\end{equation}
which implies $\prod_{I_{x}}\Phi\left(T_{ij}\right)=T_{ab}$, a non-trivial parallel transport. Again and for the same previously reason, $Hol_{p}\left(T\right)=\mathbb{Z}_{2}$. But $NCF=1$, and it is a noncontextual model. We could exchange a maximally random path for identity one, not changing the holonomy group. For example, let’s change the path $bc$ to the trivial one, $T_{bc}=id$, keeping two maximally random paths. Though contextual fraction one can find that the model is still noncontextual. But if we change another maximally random path, say $cd$, to the trivial one, one gets $NCF=\frac{1}{2}$, implying contextual behavior.
\end{example}

\begin{example}[Counterexample for trivial holonomy]
An example that has no singularity, but has contextuality even though it does not have holonomy, Table \ref{Contextual trivial}.
\begin{figure}
  \centering
\scalebox{0.3}{
\begin{tikzpicture}

\draw [ultra thick] (-9,12) -- (-3,10) node[right] {\Huge{$\ a$}};
\draw [ultra thick] (-3,10) -- (0,13) node[right] {\Huge{$\ b$}}; 
\draw [ultra thick] (0,13) -- (-6,15)
node[left] {\Huge{$c\ $}}; 
\draw [ultra thick] (-6,15) -- (-9,12) node[left] {\Huge{$d\ $}};

\filldraw [black] (-9,12) circle (4pt);
\filldraw [black] (-3,10) circle (4pt);
\filldraw [black] (0,13) circle (4pt);
\filldraw [black] (-6,15) circle (4pt);

\draw[loosely dotted, ultra thick] (-9,12) -- (-9,21);
\draw[loosely dotted, ultra thick] (-3,10) -- (-3,19);
\draw[loosely dotted, ultra thick] (0,13) -- (0,22);
\draw[loosely dotted, ultra thick] (-6,15) -- (-6,24);

\filldraw [black] (-9,21) circle (4pt);
\filldraw [black] (-3,19) circle (4pt);
\filldraw [black] (0,22) circle (4pt);
\filldraw [black] (-6,24) circle (4pt);

\filldraw [black] (-9,19) circle (4pt);
\filldraw [black] (-3,17) circle (4pt);
\filldraw [black] (0,20) circle (4pt);
\filldraw [black] (-6,22) circle (4pt);

\draw [ultra thick] (-9,19) -- (-3,19);
\draw [ultra thick] (-3,19) -- (0,22) node[right] {\Huge{$\ 1$}}; 
\draw [ultra thick] (0,20) -- (-6,22); 

\draw [ultra thick] (-9,21) -- (-3,17);
\draw [ultra thick] (-3,17) -- (0,20) node[right] {\Huge{$\ 0$}}; 
\draw [ultra thick] (0,22) -- (-6,24); 

\draw [ultra thick] (-9,21) -- (-3,19);
\draw [ultra thick] (-9,19) -- (-3,17);

\draw [ultra thick] (-6,22) -- (-9,19);
\draw [ultra thick] (-6,24) -- (-9,21);


\end{tikzpicture}}
\caption{The visualization via bundle diagram of the counterexample for trivial holonomy. The idea is to keep all edges trivial, except one that has the same probabilities as the Bell model.}
    \label{Contextual trivial Fig}
\end{figure}

\begin{table}
  \centering
\begin{tabular}{|c|c|c|c|c|}
\hline\noalign{\smallskip}
& \large{$00$} & \large{$01$} & \large{$10$} & \large{$11$} \\
\noalign{\smallskip}\hline\noalign{\smallskip}
\large{$ab$} & \large{$\frac{1}{2}$} & \large{$0$} & \large{$0$} & \large{$\frac{1}{2}$} \\
\noalign{\smallskip}\hline\noalign{\smallskip}
\large{$bc$} & \large{$\frac{1}{2}$} & \large{$0$} & \large{$0$} & \large{$\frac{1}{2}$} \\
\noalign{\smallskip}\hline\noalign{\smallskip}
\large{$cd$} & \large{$\frac{1}{2}$} & \large{$0$} & \large{$0$} & \large{$\frac{1}{2}$} \\
\noalign{\smallskip}\hline\noalign{\smallskip}
\large{$da$} & \large{$\frac{3}{8}$} & \large{$\frac{1}{8}$} & \large{$\frac{1}{8}$} & \large{$\frac{3}{8}$} \\
\noalign{\smallskip}\hline
\end{tabular}
    \caption{Table of outcome probabilities of each context of the counterexample for trivial holonomy.}
    \label{Contextual trivial}
\end{table}
For three of the four edges holds $T_{ab}=T_{bc}=T_{cd}=Id$. The remain has the form
\begin{equation}
T_{ab}=\begin{bmatrix}
\frac{3}{4} & \frac{1}{4}\\
\frac{1}{4} & \frac{3}{4}
\end{bmatrix},
\end{equation}
which implies $\prod_{I_{x}}\Phi\left(T_{ij}\right)=Id$, a trivial parallel transport. The holonomy group is trivial, but $NCF=\frac{3}{4}$, and therefore it is a contextual model.
\end{example}

\subsection{Discrete exterior derivative in 1-contextuality}
\label{IVF}

The $\Phi$ image of a Markov kernel will be written as $\Phi\left(T_{nm}\right)=e^{\nabla_{nm}}$, with $\nabla_{nm}$ the average covariant derivative on the $1$-simplex $nm$. We then define the connection of the frame bundle as the $1$-form valued in the Lie algebra $\mathfrak{g}$
\begin{align}
\begin{split}
    \nabla:\mathcal{C}_{1} &\to\mathfrak{g}\\
    nm &\mapsto \nabla_{nm}.
\end{split}
\end{align}
An $1$-form does not act only on the $1$-simplex, but depends on its orientation, here implicit in the ordering $nm$.

Locally, in the sense of belonging to a context, the connection $1$-form can be write as its vertical and horizontal parts,
\begin{equation}
\nabla_{nm}=\triangle_{nm}+A_{nm}.
\end{equation}
with the horizontal part
\begin{align}
\begin{split}
    \triangle:\mathcal{C}_{1}&\to\mathfrak{g}\\
    nm &\mapsto \triangle_{nm}
\end{split}
\end{align}
and the vertical part
\begin{align}
\begin{split}
    A:\mathcal{C}_{1}&\to\mathfrak{g}\\
    nm &\mapsto A_{nm}
\end{split}
\end{align}
being $1$-forms as well. These two parts commute in the group representation, implying that their sum in the algebra representation is well defined.

The product of an $1$-form $\omega$ with a $1$-simplex $nm$ to a Lie algebra will be denoted by
\begin{equation}
    \bra{\omega}\ket{nm}=log\left(e^{\omega\left(nm\right)}\right).
\end{equation}
The boundary operator allows us to write the product over a loop, making sense of the notation $log(exp)$ to symbolize the Baker–Campbell–Hausdorff formula for a non-Abelian algebra,
\begin{equation}
    \bra{\omega}\ket{\partial S}=log\left(\prod_{i}e^{\omega\left(a_{i}a_{i+1}\right)}\right). 
\end{equation}
By the coboundary operator we can define the $2$-forms $DA$ and $dA$, respectively the non-Abelian and Abelian cases, as the discrete versions of curvature that codify the contextual behaviour on the frame bundle. With a curvature we can write from a boundary $\partial S$
\begin{align}
\begin{split}
\bra{\nabla}\ket{\partial S}&=\bra{\triangle}\ket{\partial S}+\bra{A}\ket{\partial S}\\
&=\bra{A}\ket{\partial S}\\
&=\bra{DA}\ket{S},
\end{split}
\end{align}
where the second equality follows from $\bra{\triangle}\ket{\partial S}=0$ once it depends only the base of the bundle, and $S$ is the connected surface defined by immersion the base in a Euclidean space. From Theorem \ref{Holonomy}, noncontextuality implies the discrete contextual curvature is trivial. But again, the reciprocal doesn't hold.

What we are doing here is to translate, even unfaithfully, the $1$-contextuality of a measure bundle into a topological property of a frame bundle, its holonomy, here embodied by its curvature. It is interesting to see the rule of curvature as the “generator of contextuality” in such representation. One could thus thing the measure bundle as a kind of “generalization” of the usual frame bundle by allowing not just orthogonal transformations but any stochastic transformations. The study of possible "contextuality generators" and their relationship with the bundle's topology will be the focus of future work.

\subsection{Disturbing empirical models}
\label{Disturbing empirical models}

For disturbing empirical models, the marginal measures could be different, thus the definition of noncontextuality does not hold. Here we will summarize the analogy between manifolds and empirical models that inspired the topological study of contextuality. Then we will present a conjecture that extends the relation of holonomy and $1$-contextual behavior to models with disturbance.

The analogy can be seen from the definitions shown in Table \ref{tabeladeanalogia}. The idea of this analogy follows from the fact that, in the finite case, a context can be seen as a Boolean structure. This is equivalent to having a discrete topology on the context measurements, with opens being subcontexts. The outcomes also have a discrete topology, with the events defining opens now in the topological bundle. Although the triviality of the topological bundle follows directly from the triviality of the measurable bundle, the same cannot be said for the measure bundle. This can only be defined as locally trivial by the contexts, which determine a cover of the measurement set. We have a structure analogous to a manifold, but not $\mathbb{R}^n$ as a reference for local isomorphism, but a classical description of the distributions imposed by the compatibility of the measurements. Noncontextuality becomes analogous to the question of the existence of an isomorphism of the manifold in $\mathbb{R}^n$. The transition functions are immediately identified as the objects that encode the disturbance in the model, as they describe how the model behaves in the intersections of the contexts.
\begin{table*}[]
    \centering
    \begin{tabular}{|p{0.45\textwidth}|p{0.45\textwidth}|}
\noalign{\smallskip}\hline\noalign{\smallskip}
\textbf{Differentiable manifolds} & \textbf{Empirical models} \\
\noalign{\smallskip}\hline\noalign{\smallskip}
        $M$ is a Hausdorff topological space with a countable basis & $M$ is a measure bundle \\
\noalign{\smallskip}\hline\noalign{\smallskip}
        $\mathcal{U}$ is a collection of homeomorphisms $x:U\to \mathbb{R}^m$ of open sets $U\subset M$ onto open sets $x(U)\subset \mathbb{R}^m$ & $\mathcal{C}$ is a collection of contexts of $M$ \\
\noalign{\smallskip}\hline\noalign{\smallskip}
        The domains $U$ of the homeomorphisms $x\in\mathcal{U}$ cover $M$ & The contexts $U$ in $C$ cover $M$ \\
\noalign{\smallskip}\hline\noalign{\smallskip}
        Given $x:U\to \mathbb{R}^m$ and $y:V\to \mathbb{R}^m$ of $\mathcal{U}$ with $U\cap V\neq\emptyset$, then $\phi_{xy}:x(U\cap V)\to y(U\cap V)$ is a homeomorphism of class $C^k$ & Given $s_U$ and $s_V$ with $U,V\in\mathcal{C}$ and $U\cap V\neq\emptyset$, then $t_{UV}:s_U(U\cap V)\to s_V(U\cap V)$ is an isomorphism \\
\noalign{\smallskip}\hline\noalign{\smallskip}
        Given a homeomorphism $z:W\to \mathbb{R}^m$ of an open set $W\subset M$ onto an open $z(W)\subset \mathbb{R}^m$, such that $\phi_{zx}$ and $\phi_{xz}$ are of class $C^k$ for each $x\in\mathcal{U}$, then $z\in\mathcal{U}$
        & Given $s_W$ with $W\in M$, such that $t_{WU}$ and $t_{UW}$ exist for all $s_U$, then $W\in\mathcal{C}$ \\
\noalign{\smallskip}\hline\noalign{\smallskip}
    \end{tabular}
    \caption{Table showing the analogy between the definitions of differentiable manifolds (left) and empirical models (right). For simplicity, we kept the column for empirical models in the finite case. Note that the notion of disturbance is analogous to non-trivial transition maps between contexts.}
    \label{tabeladeanalogia}
\end{table*}

\begin{definition}
For a disturbing empirical model, the transition function between contexts is defined by the application
\begin{align}
\begin{split}
t_{UV}:\Gamma\left(U\cap V\right)&\to\Gamma\left(U\cap V\right)\\
\left(U\cap V,\textbf{O}^{U\cap V},\mu^{V}|_{U\cap V}\right)&\mapsto\left(U\cap V,\textbf{O}^{U\cap V},\mu^{U}|_{U\cap V}\right).
\end{split}
\end{align}
\end{definition}

The transition function $t_{UV}$ is a Markov kernel when describing disturbance in models that can be codified in $1$-simplices, where the intersection is a vertex that links two paths. The disturbance can be understood as the non-trivial contribution of the vertex in the path transformation. Using the same argumentation as before, we conjecture the following.

\begin{conjecture}[Parallel transport and contextuality: disturbing case]
If a measure bundle with square integrable Markov kernels and with non-singular stochastic matrices is noncontextual, then for any two paths $I_{a}^{b}$ and $I'^{b}_{a}$ with the same initial and final vertices, the following holds:
\begin{equation}
\prod_{I_{a}^{b}}\Phi\left(t_{a_{i}a_{i}}T_{a_{i}a_{i+1}}\right)=
\prod_{I'^{b}_{a}}\Phi\left(t_{a_{i}a_{i}}T_{a_{i}a_{i+1}}\right).
\label{Disturbing}
\end{equation}
\end{conjecture}

The holonomy group for $1$-contextual models remains the same. The transition functions need to be taken into account when defining the group, and the description by holonomy remains valid. For a loop $\gamma$ based in $x\in\mathcal{C}_{0}$ of a scenario, the element of the holonomy group $Hol_x\left(T\right)$ for this loop has the form
\begin{equation}
    P\left(\gamma\right)=\prod_{\gamma}\Phi\left(t_{a_{i}a_{i}}T_{a_{i}a_{i+1}}\right).
\end{equation}
For a non-Abelian group, the product order matters, but for Abelian ones, we can write the elegant equation
\begin{equation}
    P\left(\gamma\right)=\prod_{\gamma}\Phi\left(t_{a_{i}a_{i}}\right)\prod_{\gamma}\Phi\left(T_{a_{i}a_{i+1}}\right).
\end{equation}
We can calculate directly from the table of probabilities for the local sections of contexts with two measurements, and from the transformation data by the disturbance of the model, the elements of the holonomy group.

The conjecture is loosely similar to the extended version of the empirical model in Ref. \cite{Amaral_2019} for disturbing models that can be codified in $1$-simplices. We can just write an extended empirical model where $t_{a_{i}a_{i}}$ are just another (virtual) context and work in a non-disturbing model.

\section{Discussion}
\label{V}

The bundle approach to contextuality is an important way to visualize the phenomenon of contextuality in a generic manner, and it allows a different path to explore such a phenomenon. Our first objective in this work was to construct such an approach for the case of non-finite fibers, indicating why scenarios are usually encoded in simplicial complexes, the different notions of noncontextuality, and their unification by the Fine–Abramsky–Brandenburger theorem.

Our second objective was to explore the relation of topology and contextual behavior, with the definition of $n$-contextuality. The example of the GHZ model shows that all levels of the hierarchy are presented in quantum theory, and the tetrahedron scenario gives hints of a relation between non-trivial topology and contextuality. We can perceive that although the usual topology of the base of the measure bundle does not have a direct relation with contextual behavior, it can still influence such behavior. A more suitable topology might be the key to bringing this influence to light.

Our last objective was to explore the first non-trivial level of $n$-contextuality. By restricting ourselves to the behavior generated by the edges, the analogy with discrete differential geometry allowed us to construct the notion of connection, and this led us to the relationship between contextual behavior and holonomy. Such a relation permits defining a curvature that appears in contextual models. The analogy with differential geometry becomes deeper when focusing on topological manifolds, which may be very useful in the future for considering models with disturbance.

One of the questions in Ref. \cite{Terra_2019_FibreBundle} is whether contextuality can be fully described through the first homology group or not. As presented here, specifically with the tetrahedron scenario and the GHZ models, there is contextuality that depends on higher homology groups, culminating in the notion of $n$-contextuality. More work must be done to construct identification tools to turn this concept into a clearer framework. The definition of $n$-contextuality raises the possibility of also studying the disturbance with its dependence on the dimension of the simplicial complexes, an “$n$-disturbance”. A deeper study of these hierarchies, as well as their formal definition, are open problems.

Tools to identify contextuality already exist, but the contextual connection has a straightforward interpretation and a visual appeal. There are many paths for exploring the relationship of this dynamic with $1$-contextuality. The role that singular paths, paths with Markov kernels that present singularity, have in erasing contextual behaviors should be better explored, enabling a concept that identifies contextuality under necessary and sufficient conditions. For this, other ways to extract holonomy will be necessary, especially one that is more general than an algebraic group. The known problem of using group structure, since it is not natural when handling probabilities, implies violations in the characterization of contextuality \cite{montanhano2021contextuality}.

To conclude, a generalization that exposes the role of higher holonomy groups in $n$-contextuality is a more challenging but more rewarding path. It would allow a general study of the hierarchy, its topological dependence, and the characterization of contextuality. For disturbing models, the codification of extended contextuality in transition functions and its role in the holonomy group deserves future attention. This path will likely require a categorical construction \cite{Karvonen_2019}, which could allow the study of non-trivial measurable bundles and the exact data one loses when studying the frame bundles and holonomy.

\begin{acknowledgments}
The author thanks Marcelo Terra Cunha, Rui Soares Barbosa and Hamadalli de Camargo for helpful comments and discussions. 

This study was financed in part by the Coordenação de Aperfeiçoamento de Pessoal de Nível Superior – Brasil (CAPES) – Finance Code 001.
\end{acknowledgments}

\bibliographystyle{quantum}
\bibliography{bib}

\begin{thebibliography}{10}

\bibitem{PhysRev.47.777}
A.~Einstein, B.~Podolsky, and N.~Rosen.
\newblock ``Can quantum-mechanical description of physical reality be
  considered complete?''.
\newblock \href{https://dx.doi.org/10.1103/PhysRev.47.777}{Phys. Rev. {\bf 47},
  777--780}~(1935).

\bibitem{PhysicsPhysiqueFizika.1.195}
J.~S. Bell.
\newblock ``On the einstein podolsky rosen paradox''.
\newblock \href{https://dx.doi.org/10.1103/PhysicsPhysiqueFizika.1.195}{Physics
  Physique Fizika {\bf 1}, 195--200}~(1964).

\bibitem{cite-key}
B.~Hensen, H.~Bernien, A.~E. Dr{\'e}au, A.~Reiserer, N.~Kalb, M.~S. Blok,
  J.~Ruitenberg, R.~F.~L. Vermeulen, R.~N. Schouten, C.~Abell{\'a}n, W.~Amaya,
  V.~Pruneri, M.~W. Mitchell, M.~Markham, D.~J. Twitchen, D.~Elkouss,
  S.~Wehner, T.~H. Taminiau, and R.~Hanson.
\newblock ``Loophole-free bell inequality violation using electron spins
  separated by 1.3 kilometres''.
\newblock \href{https://dx.doi.org/10.1038/nature15759}{Nature {\bf 526},
  682--686}~(2015).

\bibitem{PhysRevLett.115.250401}
Marissa Giustina, Marijn A.~M. Versteegh, S\"oren Wengerowsky, Johannes
  Handsteiner, Armin Hochrainer, Kevin Phelan, Fabian Steinlechner, Johannes
  Kofler, Jan-\AA{}ke Larsson, Carlos Abell\'an, Waldimar Amaya, Valerio
  Pruneri, Morgan~W. Mitchell, J\"orn Beyer, Thomas Gerrits, Adriana~E. Lita,
  Lynden~K. Shalm, Sae~Woo Nam, Thomas Scheidl, Rupert Ursin, Bernhard
  Wittmann, and Anton Zeilinger.
\newblock ``Significant-loophole-free test of bell's theorem with entangled
  photons''.
\newblock \href{https://dx.doi.org/10.1103/PhysRevLett.115.250401}{Phys. Rev.
  Lett. {\bf 115}, 250401}~(2015).

\bibitem{PhysRevLett.115.250402}
Lynden~K. Shalm, Evan Meyer-Scott, Bradley~G. Christensen, Peter Bierhorst,
  Michael~A. Wayne, Martin~J. Stevens, Thomas Gerrits, Scott Glancy, Deny~R.
  Hamel, Michael~S. Allman, Kevin~J. Coakley, Shellee~D. Dyer, Carson Hodge,
  Adriana~E. Lita, Varun~B. Verma, Camilla Lambrocco, Edward Tortorici, Alan~L.
  Migdall, Yanbao Zhang, Daniel~R. Kumor, William~H. Farr, Francesco Marsili,
  Matthew~D. Shaw, Jeffrey~A. Stern, Carlos Abell\'an, Waldimar Amaya, Valerio
  Pruneri, Thomas Jennewein, Morgan~W. Mitchell, Paul~G. Kwiat, Joshua~C.
  Bienfang, Richard~P. Mirin, Emanuel Knill, and Sae~Woo Nam.
\newblock ``Strong loophole-free test of local realism''.
\newblock \href{https://dx.doi.org/10.1103/PhysRevLett.115.250402}{Phys. Rev.
  Lett. {\bf 115}, 250402}~(2015).

\bibitem{PhysRev.85.166}
David Bohm.
\newblock ``A suggested interpretation of the quantum theory in terms of
  "hidden" variables. i''.
\newblock \href{https://dx.doi.org/10.1103/PhysRev.85.166}{Phys. Rev. {\bf 85},
  166--179}~(1952).

\bibitem{Kochen1975}
Simon Kochen and E.~P. Specker.
\newblock ``The problem of hidden variables in quantum mechanics''.
\newblock \href{https://dx.doi.org/10.1007/978-94-010-1795-4\_17}{Pages
  293--328}.
\newblock Springer Netherlands. Dordrecht~(1975).

\bibitem{Abramsky_2011}
Samson Abramsky and Adam Brandenburger.
\newblock ``The sheaf-theoretic structure of non-locality and contextuality''.
\newblock \href{https://dx.doi.org/10.1088/1367-2630/13/11/113036}{New Journal
  of Physics {\bf 13}, 113036}~(2011).

\bibitem{doring2020contextuality}
Andreas Döring and Markus Frembs.
\newblock ``Contextuality and the fundamental theorems of quantum mechanics''.
\newblock Preprint \href{https://arxiv.org/abs/1910.09591}{arXiv:1910.09591
  [math-ph]}~(2020).

\bibitem{abramsky2020contextuality}
Samson Abramsky.
\newblock ``Contextuality: At the borders of paradox''.
\newblock \href{https://dx.doi.org/10.1093/oso/9780198748991.003.0011}{Oxford
  Scholarship Online}~(2018).

\bibitem{Abramsky2015ContextualityCA}
Samson Abramsky, Rui~Soares Barbosa, K.~Kishida, Raymond Lal, and S.~Mansfield.
\newblock ``Contextuality, cohomology and paradox''.
\newblock Preprint \href{https://arxiv.org/abs/1502.03097}{arXiv:1502.03097
  [quant-ph]}~(2015).

\bibitem{Abramsky_2019}
Samson Abramsky and Giovanni Carù.
\newblock ``Non-locality, contextuality and valuation algebras: a general
  theory of disagreement''.
\newblock \href{https://dx.doi.org/10.1098/rsta.2019.0036}{Philosophical
  Transactions of the Royal Society A: Mathematical, Physical and Engineering
  Sciences {\bf 377}, 20190036}~(2019).

\bibitem{Dzhafarov1}
Ehtibar~N. Dzhafarov, Víctor~H. Cervantes, and Janne~V. Kujala.
\newblock ``Contextuality in canonical systems of random variables''.
\newblock \href{https://dx.doi.org/10.1098/rsta.2016.0389}{Philosophical
  Transactions of the Royal Society A: Mathematical, Physical and Engineering
  Sciences {\bf 375}, 20160389}~(2017).

\bibitem{Dzhafarov2}
Ehtibar~N. Dzhafarov.
\newblock ``On joint distributions, counterfactual values and hidden variables
  in understanding contextuality''.
\newblock \href{https://dx.doi.org/10.1098/rsta.2019.0144}{Philosophical
  Transactions of the Royal Society A: Mathematical, Physical and Engineering
  Sciences {\bf 377}, 20190144}~(2019).

\bibitem{Dzhafarov3}
Janne~V. Kujala and Ehtibar~N. Dzhafarov.
\newblock ``Measures of contextuality and non-contextuality''.
\newblock \href{https://dx.doi.org/10.1098/rsta.2019.0149}{Philosophical
  Transactions of the Royal Society A: Mathematical, Physical and Engineering
  Sciences {\bf 377}, 20190149}~(2019).

\bibitem{Spekkens_2005}
R.~W. Spekkens.
\newblock ``Contextuality for preparations, transformations, and unsharp
  measurements''.
\newblock \href{https://dx.doi.org/10.1103/physreva.71.052108}{Physical Review
  A{\bf 71}}~(2005).

\bibitem{Cabello_2014}
Adán Cabello, Simone Severini, and Andreas Winter.
\newblock ``Graph-theoretic approach to quantum correlations''.
\newblock \href{https://dx.doi.org/10.1103/physrevlett.112.040401}{Physical
  Review Letters{\bf 112}}~(2014).

\bibitem{Car__2017}
Giovanni Carù.
\newblock ``On the cohomology of contextuality''.
\newblock \href{https://dx.doi.org/10.4204/eptcs.236.2}{Electronic Proceedings
  in Theoretical Computer Science {\bf 236}, 21–39}~(2017).

\bibitem{caru2018complete}
Giovanni Carù.
\newblock ``Towards a complete cohomology invariant for non-locality and
  contextuality''.
\newblock Preprint \href{https://arxiv.org/abs/1807.04203}{arXiv:1807.04203
  [quant-ph]}~(2018).

\bibitem{montanhano2021contextuality}
Sidiney~B. Montanhano.
\newblock ``Contextuality and semi-module cohomology''.
\newblock Preprint \href{https://arxiv.org/abs/2104.11411}{arXiv:2104.11411
  [math.AC]}~(2021).

\bibitem{Beer_2018}
Kerstin Beer and Tobias~J. Osborne.
\newblock ``Contextuality and bundle diagrams''.
\newblock \href{https://dx.doi.org/10.1103/physreva.98.052124}{Physical Review
  A{\bf 98}}~(2018).

\bibitem{Terra_2019_FibreBundle}
Marcelo~Terra Cunha.
\newblock ``On measures and measurements: a fibre bundle approach to
  contextuality''.
\newblock \href{https://dx.doi.org/10.1098/rsta.2019.0146}{Phil. Trans. Roy.
  Soc. A{\bf 377}}~(2019).

\bibitem{barbosa2019continuousvariable}
Rui~Soares Barbosa, Tom Douce, Pierre-Emmanuel Emeriau, Elham Kashefi, and
  Shane Mansfield.
\newblock ``Continuous-variable nonlocality and contextuality''.
\newblock Preprint \href{https://arxiv.org/abs/1905.08267}{arXiv:1905.08267
  [quant-ph]}~(2019).

\bibitem{Abramsky_2017}
Samson Abramsky, Rui~Soares Barbosa, and Shane Mansfield.
\newblock ``Contextual fraction as a measure of contextuality''.
\newblock \href{https://dx.doi.org/10.1103/physrevlett.119.050504}{Physical
  Review Letters{\bf 119}}~(2017).

\bibitem{amaral2017geometrical}
Barbara Amaral and Marcelo~Terra Cunha.
\newblock ``On geometrical aspects of the graph approach to contextuality''.
\newblock Preprint \href{https://arxiv.org/abs/1709.04812}{arXiv:1709.04812
  [quant-ph]}~(2017).

\bibitem{Amaral2018}
Barbara Amaral and Marcelo Terra~Cunha.
\newblock ``Contextuality: The compatibility-hypergraph approach''.
\newblock \href{https://dx.doi.org/10.1007/978-3-319-93827-1_2}{Pages 13--48}.
\newblock Springer International Publishing. Cham~(2018).

\bibitem{Wang2021}
Daphne Wang, Mehrnoosh Sadrzadeh, Samson Abramsky, and Victor~H. Cervantes.
\newblock ``On the quantum-like contextuality of ambiguous phrases''.
\newblock Preprint \href{https://arxiv.org/abs/2107.14589}{arXiv:2107.14589
  [cs.CL]}~(2021).

\bibitem{duarte2020investigating}
Cristhiano Duarte, Barbara Amaral, Marcelo~Terra Cunha, and Matthew Leifer.
\newblock ``Investigating coarse-grainings and emergent quantum dynamics with
  four mathematical perspectives''.
\newblock Preprint \href{https://arxiv.org/abs/2011.10349}{arXiv:2011.10349
  [quant-ph]}~(2020).

\bibitem{Karvonen_2019}
Martti Karvonen.
\newblock ``Categories of empirical models''.
\newblock \href{https://dx.doi.org/10.4204/eptcs.287.14}{Electronic Proceedings
  in Theoretical Computer Science {\bf 287}, 239–252}~(2019).

\bibitem{Del_Santo_2019}
Flavio~Del Santo and Nicolas Gisin.
\newblock ``Physics without determinism: Alternative interpretations of
  classical physics''.
\newblock \href{https://dx.doi.org/10.1103/physreva.100.062107}{Physical Review
  A{\bf 100}}~(2019).

\bibitem{FibreBundle}
Dale Husemoller.
\newblock ``Generalities on bundles''.
\newblock \href{https://dx.doi.org/10.1007/978-1-4757-2261-1}{Pages 11--23}.
\newblock Springer New York. New York, NY~(1994).

\bibitem{Hatcher:478079}
Allen Hatcher.
\newblock ``{Algebraic topology}''.
\newblock Cambridge Univ. Press. Cambridge~(2000).
\newblock  url:~\url{https://cds.cern.ch/record/478079}.

\bibitem{BundleNorman}
NORMAN STEENROD.
\newblock ``The topology of fibre bundles. (pms-14)''.
\newblock Princeton University Press. ~(1951).
\newblock  url:~\url{http://www.jstor.org/stable/j.ctt1bpm9t5}.

\bibitem{PhysRevA.104.022201}
Leonardo Santos and Barbara Amaral.
\newblock ``Conditions for logical contextuality and nonlocality''.
\newblock \href{https://dx.doi.org/10.1103/PhysRevA.104.022201}{Phys. Rev. A
  {\bf 104}, 022201}~(2021).

\bibitem{Popescu1994}
Sandu Popescu and Daniel Rohrlich.
\newblock ``Quantum nonlocality as an axiom''.
\newblock \href{https://dx.doi.org/10.1007/BF02058098}{Foundations of Physics
  {\bf 24}, 379--385}~(1994).

\bibitem{PhysRevLett.101.020403}
Alexander~A. Klyachko, M.~Ali Can, Sinem Binicio\ifmmode~\breve{g}\else
  \u{g}\fi{}lu, and Alexander~S. Shumovsky.
\newblock ``Simple test for hidden variables in spin-1 systems''.
\newblock \href{https://dx.doi.org/10.1103/PhysRevLett.101.020403}{Phys. Rev.
  Lett. {\bf 101}, 020403}~(2008).

\bibitem{Brandenburger_2008}
Adam Brandenburger and Noson Yanofsky.
\newblock ``A classification of hidden-variable properties''.
\newblock \href{https://dx.doi.org/10.1088/1751-8113/41/42/425302}{Journal of
  Physics A: Mathematical and Theoretical {\bf 41}, 425302}~(2008).

\bibitem{Spekkens_2014}
Robert~W. Spekkens.
\newblock ``The status of determinism in proofs of the impossibility of a
  noncontextual model of quantum theory''.
\newblock \href{https://dx.doi.org/10.1007/s10701-014-9833-x}{Foundations of
  Physics {\bf 44}, 1125--1155}~(2014).

\bibitem{Cabello_2013}
Adán Cabello, Piotr Badziąg, Marcelo~Terra Cunha, and Mohamed Bourennane.
\newblock ``Simple hardy-like proof of quantum contextuality''.
\newblock \href{https://dx.doi.org/10.1103/physrevlett.111.180404}{Physical
  Review Letters{\bf 111}}~(2013).

\bibitem{PhysRevLett.68.2981}
Lucien Hardy.
\newblock ``Quantum mechanics, local realistic theories, and lorentz-invariant
  realistic theories''.
\newblock \href{https://dx.doi.org/10.1103/PhysRevLett.68.2981}{Phys. Rev.
  Lett. {\bf 68}, 2981--2984}~(1992).

\bibitem{PhysRevLett.71.1665}
Lucien Hardy.
\newblock ``Nonlocality for two particles without inequalities for almost all
  entangled states''.
\newblock \href{https://dx.doi.org/10.1103/PhysRevLett.71.1665}{Phys. Rev.
  Lett. {\bf 71}, 1665--1668}~(1993).

\bibitem{bell_aspect_2004}
J.~S. Bell and Alain Aspect.
\newblock ``Speakable and unspeakable in quantum mechanics: Collected papers on
  quantum philosophy''.
\newblock \href{https://dx.doi.org/10.1017/CBO9780511815676}{Cambridge
  University Press}. ~(2004).
\newblock 2 edition.

\bibitem{PhysRevLett.23.880}
John~F. Clauser, Michael~A. Horne, Abner Shimony, and Richard~A. Holt.
\newblock ``Proposed experiment to test local hidden-variable theories''.
\newblock \href{https://dx.doi.org/10.1103/PhysRevLett.23.880}{Phys. Rev. Lett.
  {\bf 23}, 880--884}~(1969).

\bibitem{Vorobyev_1962}
N.~N. Vorob’ev.
\newblock ``Consistent families of measures and their extensions''.
\newblock \href{https://dx.doi.org/10.1137/1107014}{Theory Probab. Appl.{\bf
  7}}~(1959).

\bibitem{barbosa2015contextuality}
Rui~Soares Barbosa.
\newblock ``Contextuality in quantum mechanics and beyond''.
\newblock PhD thesis.
\newblock DPhil thesis, University of Oxford.
\newblock ~(2015).

\bibitem{Schwarz1994}
Albert~S. Schwarz.
\newblock ``Homotopy theory of fibrations''.
\newblock \href{https://dx.doi.org/10.1007/978-3-662-02998-5_12}{Pages
  191--207}.
\newblock Springer Berlin Heidelberg. Berlin, Heidelberg~(1994).

\bibitem{Okay_2020}
Cihan Okay and Robert Raussendorf.
\newblock ``Homotopical approach to quantum contextuality''.
\newblock \href{https://dx.doi.org/10.22331/q-2020-01-05-217}{Quantum {\bf 4},
  217}~(2020).

\bibitem{amaral2019resource}
Barbara Amaral.
\newblock ``Resource theory of contextuality''.
\newblock \href{https://dx.doi.org/10.1098/rsta.2019.0010}{Philosophical
  Transactions of the Royal Society A: Mathematical, Physical and Engineering
  Sciences {\bf 377}, 20190010}~(2019).

\bibitem{Barrett_2005}
Jonathan Barrett, Noah Linden, Serge Massar, Stefano Pironio, Sandu Popescu,
  and David Roberts.
\newblock ``Nonlocal correlations as an information-theoretic resource''.
\newblock \href{https://dx.doi.org/10.1103/physreva.71.022101}{Physical Review
  A{\bf 71}}~(2005).

\bibitem{Soares_Barbosa_2014}
Rui~Soares Barbosa.
\newblock ``On monogamy of non-locality and macroscopic averages: examples and
  preliminary results''.
\newblock \href{https://dx.doi.org/10.4204/eptcs.172.4}{Electronic Proceedings
  in Theoretical Computer Science {\bf 172}, 36--55}~(2014).

\bibitem{Dzhafarov_2020}
Ehtibar~N. Dzhafarov, Janne~V. Kujala, and Víctor~H. Cervantes.
\newblock ``Contextuality and noncontextuality measures and generalized bell
  inequalities for cyclic systems''.
\newblock \href{https://dx.doi.org/10.1103/physreva.101.042119}{Physical Review
  A{\bf 101}}~(2020).

\bibitem{Lawrence_2023}
Jay Lawrence, Marcin Markiewicz, and Marek {\.{Z} }ukowski.
\newblock ``Relative facts of relational quantum mechanics are incompatible
  with quantum mechanics''.
\newblock \href{https://dx.doi.org/10.22331/q-2023-05-23-1015}{Quantum {\bf 7},
  1015}~(2023).

\bibitem{Crane:2013:DGP}
Keenan Crane, Fernando de~Goes, Mathieu Desbrun, and Peter Schr\"{o}der.
\newblock ``Digital geometry processing with discrete exterior calculus''.
\newblock In ACM SIGGRAPH 2013 Courses.
\newblock \href{https://dx.doi.org/10.1145/2504435.2504442}{SIGGRAPH '13}New
  York, NY, USA~(2013). Association for Computing Machinery.

\bibitem{grady2010discrete}
L.J. Grady and J.R. Polimeni.
\newblock ``Discrete calculus: Applied analysis on graphs for computational
  science''.
\newblock
  \href{https://dx.doi.org/https://doi.org/10.1007/978-1-84996-290-2}{Springer
  London}. ~(2010).

\bibitem{PANANGADEN1999171}
Prakash Panangaden.
\newblock ``The category of markov kernels''.
\newblock \href{https://dx.doi.org/10.1016/S1571-0661(05)80602-4}{Electronic
  Notes in Theoretical Computer Science {\bf 22}, 171--187}~(1999).

\bibitem{Lang1995}
Serge Lang.
\newblock ``Vector bundles''.
\newblock \href{https://dx.doi.org/10.1007/978-1-4612-4182-9\_3}{Pages 40--63}.
\newblock Springer New York. New York, NY~(1995).

\bibitem{devito2011extension}
Ernesto~De Vito, Veronica Umanita`, and Silvia Villa.
\newblock ``An extension of mercer theorem to vector-valued measurable
  kernels''.
\newblock Preprint \href{https://arxiv.org/abs/1110.4017}{arXiv:1110.4017
  [math.FA]}~(2011).

\bibitem{Amaral_2019}
Barbara Amaral and Cristhiano Duarte.
\newblock ``Characterizing and quantifying extended contextuality''.
\newblock \href{https://dx.doi.org/10.1103/physreva.100.062103}{Physical Review
  A{\bf 100}}~(2019).

\end{thebibliography}

\end{document}